\definecolor{deepjunglegreen}{rgb}{0.05, 0.3, 0.3}
\newcommand{\Apre}{A_{\texttt{pre}}}
\newcommand{\bpre}{b_{\texttt{pre}}}
\newcommand{\apost}{a_{\texttt{post}}}
\newcommand{\idM}{\mathcal{M}_{\texttt{ID}}} 
\newcommand{\idE}{\mathcal{E}_{\texttt{ID}}} 
\theoremstyle{remark}
\newtheorem{theorem}{Theorem}[section]
\newtheorem{prop}{Proposition}[section]
\newtheorem{xmpl}{Example}
\newtheorem{asm}{Assumption}
\newtheorem{remark}{Remark}[section]
\newtheorem{proc}{Procedure}
\begin{document}

\begin{frontmatter}

\title{Synthetic Parallel Trends}

\begin{aug}
%
%
%
\author[id=au1,addressref={add1}]{\fnms{Yiqi}~\snm{Liu}\ead[label=e1]{yl3467@cornell.edu}}
\address[id=add1]{%
\orgdiv{Department of Economics},
\orgname{Cornell University}
}
\end{aug}

\vspace{-.5cm}
\small{November 8, 2025}\\
\href{https://drive.google.com/file/d/1MD1JSP1aNwMH1MtrSSLZH9HQFjY-bNlD/view?usp=sharing}{\small{Click here for the latest version.}}

\vspace{.4cm}
\support{
I am grateful to Levon Barseghyan, Francesca Molinari, and José Luis Montiel Olea for their mentorship. I also thank Jiafeng Chen, Xavier D'Haultfoeuille, Jacob Dorn, Hyewon Kim, Lihua Lei, Douglas Miller, Yaroslav Mukhin, Zhuan Pei, Alice Qi, Chen Qiu, Jonathan Roth, Xiaoxia Shi, Kyungchul Song, Jörg Stoye, Liyang Sun, Amilcar Velez, Jaume Vives-i-Bastida, Stefan Wager, and seminar participants at Cornell University and Stanford Data-Driven Seminar for valuable feedback.
}
\begin{abstract}
Popular empirical strategies for policy evaluation in the panel data literature---including difference-in-differences (DID), synthetic control (SC) methods, and their variants---rely on key identifying assumptions that can be expressed through a specific choice of weights $\omega$ relating pre-treatment trends to the counterfactual outcome. While each choice of $\omega$ may be defensible in empirical contexts that motivate a particular method, it relies on fundamentally untestable and often fragile assumptions. I develop an identification framework that allows for all weights satisfying a \textit{Synthetic Parallel Trends} assumption: the treated unit’s trend is parallel to a weighted combination of control units’ trends for a general class of weights. The framework nests these existing methods as special cases and is by construction robust to violations of their respective assumptions. I construct a valid confidence set for the identified set of the treatment effect, which admits a linear programming representation with estimated coefficients and nuisance parameters that are profiled out. In simulations where the assumptions underlying DID or SC-based methods are violated, the proposed confidence set remains robust and attains nominal coverage, while existing methods suffer severe undercoverage.
\end{abstract}

\begin{keyword}
\kwd{synthetic control}
\kwd{difference-in-differences}
\kwd{partial identification}
\kwd{linear programs with estimated coefficients}
\end{keyword}

\end{frontmatter}

\newpage
\section{Introduction}
Learning treatment effects inevitably requires assumptions about unobserved counterfactual outcomes, e.g., what would have happened to the already-treated units in the absence of treatment. The program evaluation literature has developed various empirical strategies to address this fundamental challenge, often leveraging information from comparison groups, e.g., those not exposed to the treatment. In applications with panel data where units are observed across multiple time periods, information from both untreated units and pre-treatment time periods serves as a source of identifying power under assumptions that connect them to the counterfactual. 

In this paper, I show that the key identifying assumptions underlying widely used empirical strategies in the panel data literature---including difference-in-differences (DID) under the parallel trends assumption, synthetic control (SC) methods, and their variants such as two-way fixed effects (TWFE) and synthetic difference-in-differences (SDID)--- can all be expressed through a specific choice of population weights $\omega$ that relate pre-treatment trends to the counterfactual outcome. While each choice of $\omega$ may be defensible in different empirical contexts that motivate a particular method, it relies on fundamentally untestable and often fragile assumptions that can imply very different weights across methods.\footnote{This echoes the observation in \citet*{AAHIW21} that, at the estimation level, each of the DID, SC, and SDID estimators can be written as a weighted average difference in observed outcomes, with data-dependent weights that differ markedly across estimators (see their Figure 1).} I introduce an identification framework that allows for all weights satisfying a weaker identifying assumption, formally stated in Assumption \ref{asm:spt} and termed \textit{Synthetic Parallel Trends} (SPT): the trend of the treated unit is parallel to a weighted combination of control units' trends for a general class of weights. Under SPT, the counterfactual trend is identified by a weighted average of post-treatment control trends, with weights that reproduce the treated unit's trends in the pre-treatment periods. 
This framework nests  existing methods as special cases, as each of their identifying assumptions is associated with a particular selection from the set of valid weights under SPT. Hence, the proposed framework is by construction robust to a large class of violations of the identifying assumptions required by those existing methods.

I study properties of models consistent with SPT, starting with affine weights whose components sum to one but are otherwise unrestricted in sign or magnitude. In this setting, absent restrictions on the data generating process (DGP) other than SPT, a dichotomy occurs: the sharp identification region for the counterfactual is either trivial (i.e., it spans the entire real line) or a singleton. The latter case of point identification occurs if and only if the underlying DGP has a special low-rank property detailed in Proposition \ref{prop:idM-singleton}; even when this property is not explicitly imposed but holds true in the DGP, the identified set will automatically “detect” it and adapt to a singleton. Notably, the DGP implied by a TWFE model has this low-rank property. However, the “all-or-nothing” nature of the result hints at its fragility: as shown in Example \ref{eg:TWFE-stability}, this property breaks down under perturbations to the TWFE model, which also invalidate the TWFE estimand. 

This observation motivates combining the observed data with SPT and more credible assumptions to yield informative identification regions for the counterfactual, in the spirit of \citet{M89} and the subsequent partial identification literature; see \citet{T10, M20} for reviews.
I next impose a nonnegativity constraint on the weights, thereby making affine weights convex. Convexity is a standard assumption in the causal inference literature where estimands take a weighted average form, and is often motivated by its interpretability; see, for example, \citet{Abadie21} for the use of convex weights in SC methods, and \citet{dCdH23} and \citet*{RSBP23} for weighting heterogeneous treatment effects in the TWFE literature. This paper emphasizes a different motivation for imposing convexity: it is a source of identifying power. 
Convex combinations of post-treatment control trends effectively restrict the counterfactual trend to lie between their minimum and maximum, hence ruling out affine weights that return unbounded values and ensuring a nontrivial identified set if the control trends are bounded.\footnote{This idea is reminiscent of the bounded support assumption in \citet{M89}, 
though the convex-weight condition here does not imply a bound on the support of the outcome data, but rather a bound on its first moment.}
Under SPT with convex weights, there exists a time-invariant convex weight such that the treated unit’s trend lies in the convex hull of the control trends across all time periods. 
Such convex combination may not be unique in short panels, and SPT explicitly accounts for potential multiplicity of weights, yielding a partially identified set for the counterfactual that can be conveniently characterized by linear programs.

SPT under convex weights nests as special cases the parallel trends assumption and the convex weighting schemes underlying SC methods. Specifically, the DID estimand under the canonical two-timing-group parallel trends assumption extended to multiple periods is associated with the convex weight equal to 
each control unit’s population share among all controls (Section \ref{subsec:PT}); the expectation of the original SC estimator proposed by \citet*{ADH10} is associated with the convex weight that balances observed pre-treatment outcomes between the treated unit and control units as the number of pre-treatment periods grows (Section \ref{subsec:sc-ppm}). I also show that the probability limit of a variant of SC methods proposed by \citet*{AAHIW21} corresponds to the convex weight that balances the unobserved latent factors between the treated unit and control units (Section \ref{subsec:sc-latent-match}). This unifying perspective sheds light on how these popular empirical strategies point identify the counterfactual by selecting a specific convex weight consistent with SPT and  assuming that the counterfactual is generated by this chosen weight, even when multiple convex weights can reproduce the treated unit’s pre-treatment trends. The credibility of each selection is context-dependent and rests on ultimately untestable assumptions involving the unobserved counterfactual. 
By allowing flexible choices of weights, the proposed framework is robust to violations of each method's identifying assumptions, as long as the treated unit's trends remain a convex combination of control trends.

I provide a valid confidence set that covers the identified set for the treated unit’s treatment effect under SPT at any pre-specified confidence level. The proposed inference procedure builds on \citet{FS19} when panel data or repeated cross-sections are available within each aggregate unit. The key insight is that the linear program characterization of the identified set is equivalent to a system of moment equalities linear in the observed trends, which can be estimated by differences in sample means at the usual $\sqrt{n}$ rate from micro-level data. The resulting test statistic is a Hadamard directionally differentiable mapping of these estimated moments that profiles out the nuisance parameters---the weights $\omega$---whose dimensionality can be much larger than the scalar treatment effect of interest. This profiling-out step translates the original linear program constraints that need to be estimated into a new objective function quadratic in $\omega$ and subject to a known simplex constraint. The proposed method is valid, though only point-wise, under weaker assumptions than those required by existing methods and is applicable for a large class of inference problems involving linear programs with estimated coefficients and potentially many nuisance parameters beyond the specific problem studied in this paper.

I demonstrate the empirical value of SPT by revisiting the placebo study of \citet*{BDM04} and \citet{AAHIW21} using the Merged Outgoing Rotation Group Earnings Data from the Current Population Survey, which provides repeated cross-sections of weekly earnings for women from 1979 to 2018 across 50 states. In simulation designs where parallel trends or the identifying assumptions of SC-based methods are violated, the proposed confidence set under SPT remains robust and attains the nominal coverage, whereas existing methods suffer severe undercoverage if the specific identifying assumption on which they rely is violated.

\subsection{Related Literature}
A growing literature in partial identification explores relaxations of the parallel trends assumption. \citet{MP18} introduce the bounded variation assumption that allows the post-treatment counterfactual trend to differ from pre-treatment trends by at most a given amount. Extending this approach, \citet{RR23} develop a general identification and inference framework for a class of restrictions on differences in trends. \citet{BK23} reinterpret parallel trends as a restriction on selection bias and bound post-treatment bias by the minimum and maximum pre-treatment biases, yielding an identified set characterized by union bounds. With two control units, \citet*{YKHS24} bound the treated unit's trends by the minimum and the maximum of the two control trends across all time periods, obtaining a similar union bound. I contribute to this literature by providing a new and non-nested set of relaxations to parallel trends, leveraging across-unit variations that are stable over time in settings with multiple control units, a source of information unexplored by \citet{BK23} and \citet{YKHS24}. In addition, parallel pre-trends may still hold under SPT but not parallel post-trends, and therefore the approach proposed by \citet{MP18} and \citet{RR23} to bound violations of parallel post-trends using observed violations of parallel pre-trends may not apply. Finally, statistical inference methods for union bounds or the setting in \citet{RR23} do not apply under the SPT assumption. To address this, I develop a new inference procedure that delivers asymptotically valid confidence sets.

A distinct growing literature, initiated by \cite*{AG03} and further developed by \cite*{ADH10}, proposes to estimate counterfactual outcomes as weighted averages of control outcomes, where in later work the weights are typically obtained as the unique solutions to penalized regression problems \citep[e.g.,][]{DI16, AAHIW21, AL21, BFR22, IV23}. 
Statistical properties are usually derived under a linear factor model, with identification made implicit through model assumptions and the choice of penalty.
\citet*{SSMB22} justify such a model under distributional restrictions involving auxiliary covariates and within a sampling framework where aggregate outcomes are averages of more granular observations.
A complementary line of work in the proximal inference literature uses control outcomes as proxies and assumes the existence of bridge functions that account for all confounding \citep*[e.g.,][]{SLNHT23, QSMDT24, PT25}. 

In contrast, I impose identifying assumptions only on population means similar to those in the DID literature, and contribute to the broader effort to develop formal identification analysis for SC-type methods through a partial identification framework.
I highlight an additional motivation for using convex weights in the SC literature beyond interpretability: convexity alone provides identifying power, yielding informative bounds on the counterfactual. 
I then build on the sampling framework in \citet{SSMB22} to provide valid inference procedures for the bounds on the treatment effect of interest, an area that has received less attention than its DID counterpart. A related study is \citet{FR21}, who propose a sensitivity analysis to bound the bias of the SC estimator by the biases from using SC to predict post-treatment control outcomes. Their approach is non-nested with the SPT assumption and treats these SC estimates as known without accounting for statistical uncertainty.
Two contemporaneous papers, \citet*{RS25} and \citet*{SXZ25}, also exploit micro-level data within aggregate units for inference. \citet*{RS25} independently derive a similar DID-implied weight proportional to the control population shares as discussed in Section \ref{subsec:PT}. However, different from this paper, \citet*{RS25} focus on a regret analysis that assumes uniqueness of the SC weight and compares it with the DID-implied weight in terms of misspecification bias; their inference procedure constructs a confidence interval for the treatment effect by test-inverting a confidence set for the weights and then applying a Bonferroni correction. \citet{SXZ25} impose a population-mean version of the SC assumption combined with parallel trends to form a doubly robust estimand that is valid if either assumption holds. Instead, I introduce a unifying framework that nests the identifying assumptions underlying DID and SC-based methods, rather than assuming any one of them holds.

Finally, there is a large literature on inference for linear programs (LPs) with estimated coefficients. Existing methods have considered inference on the optimal value of an LP \citep*[e.g.,][]{FH15, CR24, G25, GM25}, the optimal solutions \citep*[e.g.,][]{HSS22}, and the existence of a feasible solution \citep*[e.g.,][]{CSS25}. However, these procedures either do not profile out nuisance parameters such as the $\omega$ weights, introduce perturbations to LPs that lead to conservativeness, or require assumptions on the rank of the constraint coefficient matrix that are difficult to verify in the context of this paper, as the coefficient matrix implied by the existing methods considered here may have arbitrary rank.\footnote{For example, the rank of the coefficient matrix implied by a TWFE model in Example \ref{eg:TWFE-stability} is at most $1$, failing the full rank condition of \citet{FH15, G25, GM25} and making it difficult to verify the stable rank condition of \citet{CSS25}.} The profiled test statistic of this paper does not require these rank assumptions and asymptotic normality of the estimated moments alone suffices to prove its validity. The cost of less assumptions is that this method is only point-wise valid and requires test inversion  of the scalar treatment effect parameter and bootstrap critical values, though each inversion is fast due to the profiling-out step and reformulation of the original LP with estimated constraints into a quadratic program with known simplex constraint.

\textbf{Outline}: Section \ref{sec:setup} introduces notation. Section \ref{sec:id} characterizes the identified set under SPT with affine and convex weights, and shows existing methods are special cases. Section \ref{sec:inf} details the inference method for constructing a valid confidence set for the identified set. Section \ref{sec:sim} presents a simulation study. Section \ref{sec:conclusion} concludes. Appendix \ref{appn:A} collects the main proofs, with auxiliary results presented in Appendix \ref{appn:B}.

\section{Setup and Notation}
\label{sec:setup}
I focus on settings where a binary treatment is implemented at the aggregate level, such as states or countries. To facilitate the introduction of notations, consider the empirical example from \citet[ADH henceforth]{ADH10}. At the end of the year 1988, California implemented Proposition 99, a tobacco control program that increased cigarette excise tax by 25 cents per pack starting in January 1989. The outcome of interest is the state-level per capita cigarette sales in packs, observed annually from 1970 to 1989 for California and 38 other control states.\footnote{\citetalias{ADH10} exclude states that have implemented similar tobacco tax raises and the District of Columbia from the pool of control units, leaving a total of 38 control states.} Denote each aggregate unit (e.g., state) as unit $k\in\{1,...,K\}$, where $K$ is the total number of units, and without loss of generality let $k=1$ be the treated unit (e.g., California). Denote time periods as $t\in\{1,...,T_0,T\}$, where $T_0$ is the last pre-treatment period and $T$ is the first post-treatment period, corresponding to 1988 and 1989, respectively.  I focus on the case with one treated unit and one post-treatment period for simplicity, but results in this paper can be extended to multiple treated units and post-treatment periods.\footnote{For example, if the parameter of interest is the average post-treatment effect of the treated, then one can group multiple treated units into one treated group and multiple post-treatment periods into one post-treatment block similar to the setup in Section \ref{subsec:sc-latent-match}; if instead the parameter of interest is the post-treatment, time-varying heterogeneous treatment effects across treated units, then one can impose Assumption \ref{asm:spt} for each post-treatment period and treated unit combination, and the same analysis applies.} As is common in the SC literature, I take the identity of the treated unit and treatment timing as given; see \cite{IV23} for an alternative framework where they are viewed as random. I also abstract away from additional covariates, which may be included by conditioning the outcome on observed covariates.

For the purpose of identification analysis, the unit of observation is an aggregate unit and the outcome of interest is a unit's population mean. I adopt the notation in \citet{SSMB22} and denote nonstochastic population means by $\mu$. Let $\big(\mu_t^k(1),\mu_t^k(0)\big)$ be the pair of potential aggregate outcomes for unit $k$ at time $t$ with and without the absorbing binary treatment, respectively; in the running example, they correspond to the year-$t$ \textit{expected} per capita cigarette sales in packs in each state $k$, with and without the tobacco control program. At the population level, the policymaker observes $\mu_t^k(0)$ for each control unit $k\geq2$ in all periods $t$; for the treated unit, $ \mu_T^1(1)$ and $\mu_t^1(0)$ for $t\leq T_0$ are observed. Implicit in this notation are stable-unit-treatment-value and no-anticipation assumptions, under which the observed aggregate outcome is given by
$\mu_t^k\equiv\mu_t^k(0)+\left(\mu_t^k(1)-\mu_t^k(0)\right)\mathds{1}\{k=1,t=T\}.$ The target parameter of interest is the treatment effect for the treated unit ($k=1$) in the post-treatment period $T$,
\begin{align}
    \tau\equiv\mu_T^1(1)-\mu_T^1(0).\label{eqn:tau}
\end{align}
Identification of the treatment effect $\tau$ thus boils down to identifying the counterfactual $ \mu_T^1(0)$, i.e., what would happen to the treated unit (e.g., California) in the post-treatment period $T$, had it not implemented the treatment (e.g., tobacco control program). To fix ideas, I provide three examples below that contextualize the meaning of $\big(\mu_t^k(1), \mu_t^k(0)\big)$.

\begin{xmpl}[Panel Data]
\label{eg:panel}
The DID literature commonly assumes access to balanced panel data where the same individual is observed across multiple periods \citep[e.g.,][]{CS21, dCdH20, G21, SZ20, W21}. In this case, the policymaker observes a random sample of $(Y_{i1},\dots,Y_{iT},D_{i1},...,D_{iT},G_i)$, where $Y_{it}\equiv D_{it}Y_{it}(1)+(1-D_{it})Y_{it}(0)$ is the realized time-$t$ outcome for person $i$ and $D_{it}$ denotes whether person $i$ has been treated by time $t$; $\big(Y_{it}(1),Y_{it}(0)\big)$ denotes the pair of \textit{stochastic} potential outcomes for person $i$ at time $t$, with and without the treatment, respectively; $G_i\in\{1,...,K\}$ denotes which aggregate unit (e.g., state) person $i$ comes from. Then $\mu_t^k(1)$ and $\mu_t^k(0)$ are, respectively, $\mathbb{E}[Y_{it}(1)|G_i=k]$ and $\mathbb{E}[Y_{it}(0)|G_i=k]$, which are expected individual potential outcomes of those from unit $k$.
\end{xmpl}

\begin{xmpl}[Repeated Cross-Sections]
\label{eg:rcs}
    In some scenarios, the same person may not be observed across all time periods. Instead, repeated cross-sectional data where different individuals are sampled in different time periods may be available. Adapting the notation commonly used in the DID literature with repeated cross-sections \citep*[e.g.,][]{SZ20, CS21, SXZ25, SX25}, I use $T_i\in\{1,...,T\}$ to denote the period in which person $i$ is observed. Then the individual-level potential outcome under treatment status $d\in\{0,1\}$ is $Y_i(d)\equiv \sum_{t=1}^TY_{it}(d)\mathds{1}\{T_i=t\}$. The policymaker observes a random sample of $(Y_i, D_i, T_i, G_i)$, where $D_i$ is the treatment status indicator and $Y_i=D_iY_i(1)+(1-D_i)Y_i(0)$ is the realized outcome. Then $\mu_t^k(1)$ and $\mu_t^k(0)$ are, respectively, $\mathbb{E}[Y_{i}(1)|G_i=k,T_i=t]$ and $\mathbb{E}[Y_{i}(0)|G_i=k, T_i=t]$, which are expected individual potential outcomes of those from unit $k$ observed in period $t$.
\end{xmpl}

\begin{xmpl}[Linear Factor Models]
\label{eg:lfm}
In the SC literature, a common assumption is that $\mu_t^k(0)$ has a unit-by-time interactive factor structure, and its sample analog is modeled as $\mu_t^k(0)$ plus a noise term; see Section \ref{sec:sc} for details on linear factor models. Directly assuming a functional form on these aggregate outcomes can be useful in settings without micro-level data, for example when the outcome of interest is gross domestic product.
\end{xmpl}

Examples \ref{eg:panel}-\ref{eg:lfm} each implicitly specify a sampling process on which the statistical uncertainty in estimating the treatment effect $\tau$ depends.
However, this is different from 
the question of whether $\tau$ can be identified, i.e., whether one can express the counterfactual $\mu_T^1(0)$ in terms of population quantities that the sample data imperfectly reveals.
In Section \ref{sec:id}, I introduce an identifying assumption whose generality and robustness lead to partial identification of $\tau$. I address statistical uncertainty in the estimation of $\tau$ in Section \ref{sec:inf}.

\textbf{Notation.} Unless noted otherwise, $\|\cdot\|$ denotes the Euclidean norm for vectors and the spectral norm for matrices (i.e., the largest singular value). For two sequences $a_n$ and $b_n$, $a_n\lesssim b_n$ means $a_n\leq c\cdot b_n$ for some constant $c>0$.

\section{Identification}
\label{sec:id}
Since $\mu_T^1(0)$ is unobserved, identifying it necessarily requires assumptions. In this section, I introduce a new yet intuitive identification assumption, which connects the treated unit and control units through the evolution of their untreated potential outcomes over time. I discuss what can be learned under this assumption in Section \ref{subsec:LP}, and show that both DID (Section \ref{subsec:PT}) and SC-based methods (Section \ref{sec:sc}) can be viewed as special cases that satisfy this assumption:
\begin{asm}[Synthetic Parallel Trends (SPT)]
\label{asm:spt}
    \textit{There exists a set of weights $\{\omega_k\}_{k=2}^{K}$ such that $\sum_{k=2}^K\omega_k=1$ and for all $t\in\{2,...,T\}$,
    \begin{align*}
        \sum_{k=2}^K\omega_k\left(\mu_t^k(0)-\mu_{t-1}^k(0)\right)=\mu_t^1(0)-\mu_{t-1}^1(0).
    \end{align*}
    }
\end{asm}

Assumption \ref{asm:spt} requires that, absent the treatment, the trends of the treated unit can be expressed as an affine combination of the control units' trends. Let $\Delta\mu_t^k(0)\equiv\mu_t^k(0)-\mu_{t-1}^k(0)$ denote the period-$t$ trend of unit $k$. In matrix notation, Assumption \ref{asm:spt} asks for the existence of an affine solution $\omega\in\mathbb{R}^{K-1}$ to the following linear equations:
\begin{align}
\label{eqn:sle}
    \underbrace{\begin{bmatrix}
        \Delta\mu_2^2(0)& \cdots &  \Delta\mu_2^K(0)\\
        \vdots & \ddots & \vdots\\
        \Delta\mu_{T_0}^2(0) ~~& \cdots &~~ \Delta\mu_{T_0}^K(0)
    \end{bmatrix}}_{\equiv \Apre\in\mathbb{R}^{(T_0-1)\times (K-1)}}\cdot\begin{bmatrix}
    \omega_2\\
    \vdots\\
    \omega_K
\end{bmatrix}=\underbrace{\begin{bmatrix}
        \Delta\mu_2^1(0)\\
        \vdots\\
        \Delta\mu_{T_0}^1(0)
        \end{bmatrix}}_{\equiv\bpre\in\mathbb{R}^{T_0-1}},
\end{align}
where  \eqref{eqn:sle} is a system of $(T_0-1)$ equations with $(K-1)$ unknowns that involves \textit{pre-treatment} trends only; each column of $\Apre$ stacks a control unit's pre-trends and $\bpre$ stacks the treated unit's pre-trends, all observed at the population level. Assumption \ref{asm:spt} further requires that the weights solving \eqref{eqn:sle} carry over to the post-treatment period, thereby identifying the treated unit's counterfactual trend $\Delta\mu_T^1(0)$ by weighted averages of control post-trends. One can explicitly encode the add-up constraint for the weight $\omega$  by adding a row of $1$s to both $\Apre$ and $\bpre$ in \eqref{eqn:sle}, under which multiplicity of solutions usually arises when $T_0<K-1$, i.e., when the number of pre-treatment periods is less than the number of control units, as in the California example from Section \ref{asm:spt} and later in the placebo study in Section \ref{sec:sim}. In this case,  the solution to \eqref{eqn:sle} is generally not unique and the counterfactual trend $\Delta\mu_T^1(0)$ is consequently set-identified, as shown in the following proposition:

\begin{prop}
\textit{
Under Assumption \ref{asm:spt}, the sharp identified set for $\Delta\mu_T^1(0)$ is
    \begin{align}
        \idM\equiv\bigg\{\apost'\omega:\omega\in\mathbb{R}^{K-1}, \boldsymbol{1}'\omega=1, \Apre\omega=\bpre\bigg\}\label{eqn:idset}
    \end{align}
    where $\boldsymbol{1}$ is the conforming vector with all elements equal to $1$, $\apost\equiv\left[\Delta\mu_T^2(0),\, \cdots, \,\Delta\mu_T^K(0)\right]'\in\mathbb{R}^{K-1}$ stacks control units' post-trends, and $\Apre$ and $\bpre$ are defined in \eqref{eqn:sle}.
}
\label{prop:id}
\end{prop}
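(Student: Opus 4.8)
The plan is to prove sharpness through a pair of set inclusions, after first reducing the claim to a statement purely about the feasible weights. I would begin by noting that $\mu_{T_0}^1(0)$ is a pre-treatment outcome of the treated unit and is therefore observed at the population level, so identifying the counterfactual level $\mu_T^1(0)$ is informationally equivalent to identifying the counterfactual trend $\Delta\mu_T^1(0)=\mu_T^1(0)-\mu_{T_0}^1(0)$. The only quantity the data leaves undetermined is this counterfactual trend; by contrast $\Apre$, $\bpre$, and $\apost$ are all functions of observed population means (the control units' pre- and post-trends and the treated unit's pre-trends). The treated outcome $\mu_T^1(1)$ is observed but, being a distinct potential outcome, places no restriction on $\mu_T^1(0)$.

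Next I would restate Assumption \ref{asm:spt} in stacked form and split its period-$t$ identities by time. The assumption asserts the existence of $\omega\in\mathbb{R}^{K-1}$ with $\boldsymbol{1}'\omega=1$ satisfying the identity for every $t\in\{2,\dots,T\}$. The pre-treatment periods $t\le T_0$ are exactly the linear system $\Apre\omega=\bpre$ of \eqref{eqn:sle}, while the single post-treatment period $t=T$ reads $\apost'\omega=\Delta\mu_T^1(0)$. Hence SPT is equivalent to the existence of a weight $\omega$ that is feasible for \eqref{eqn:idset} (i.e.\ satisfies $\boldsymbol{1}'\omega=1$ and $\Apre\omega=\bpre$) and for which $\apost'\omega$ equals the counterfactual trend.

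The core is then a two-way inclusion between $\idM$ and the identified set, understood as the collection of values of $\Delta\mu_T^1(0)$ consistent with the observed population means and with Assumption \ref{asm:spt}. For the direction showing the identified set is contained in $\idM$, suppose $v$ is such a consistent value; then SPT furnishes a witness $\omega$, whose pre-treatment identities give $\Apre\omega=\bpre$ and $\boldsymbol{1}'\omega=1$, so $\omega$ is feasible, while its post-treatment identity gives $v=\apost'\omega$, whence $v\in\idM$. For the reverse, sharpness direction, take any $v\in\idM$, so that $v=\apost'\omega$ for some feasible $\omega$. Because the counterfactual trend is otherwise unrestricted---SPT is the only maintained assumption---I can posit a DGP that matches all observed population means and sets $\Delta\mu_T^1(0)=v$; this same $\omega$ then satisfies every SPT identity simultaneously, so the constructed DGP obeys Assumption \ref{asm:spt} and is observationally equivalent to the truth. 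Hence $v$ belongs to the identified set, and combining the inclusions yields $\idM$ as the sharp identified set.

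The step I expect to carry the real content is the sharpness direction, since it requires exhibiting, for each candidate value, a data generating process that both satisfies SPT and reproduces the observables. The argument is clean here precisely because $\Delta\mu_T^1(0)$ is the sole unobservable and is tied to the data only through the single post-treatment identity $\apost'\omega=\Delta\mu_T^1(0)$; this produces a one-to-one correspondence between feasible weights and rationalizable counterfactual trends, so no value generated by a feasible $\omega$ can be excluded and no value outside $\idM$ can be rationalized.
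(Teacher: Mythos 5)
Your proof is correct and follows essentially the same route as the paper's: both split the SPT identities into the pre-period system $\Apre\omega=\bpre$ and the post-period equation $\apost'\omega=\Delta\mu_T^1(0)$, then establish the two inclusions, with sharpness coming from the observation that any feasible $\omega^*$ with $v=\apost'\omega^*$ rationalizes $v$ as a counterfactual trend satisfying Assumption \ref{asm:spt}. Your write-up is somewhat more explicit than the paper's about why the constructed DGP is observationally equivalent (the counterfactual $\mu_T^1(0)$ being the sole unobservable), but the argument is the same.
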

In words, the identified set $\idM$ for the counterfactual trend $\Delta\mu_T^1(0)$ is given by a weighted average of control trends in the post-treatment period $T$, denoted by $\apost$, for affine weights $\omega$ that balance the pre-trends of the treated unit and of the control units via the equation $\Apre\omega=\bpre$. The counterfactual outcome $\mu_T^1(0)$ is then set-identified by adding $\mu_{T_0}^1(0)$ back to an element in the identified set for the counterfactual trend $\Delta\mu_T^1(0)$. 

\begin{remark}[Weight Selection by Penalized Regressions]
\label{rem:weight-pref}
    Using matrix algebra and properties of generalized inverses \citep[see, for example, Result 3.2.7 of][]{RD02}, the identification set in  \eqref{eqn:idset} can be equivalently written as 
    \begin{align}
        &\idM=\bigg\{\apost'\omega: \omega, u\in\mathbb{R}^{K-1},  \boldsymbol{1}'\omega=1, \omega=\Apre^-\bpre+(\boldsymbol{I}_{K-1}-\Apre^-\Apre)u\bigg\},\label{eqn:idset-inverse}
    \end{align}
    where $\Apre^-$ is a generalized inverse of $\Apre$ and $\boldsymbol{I}_{K-1}$ is the $(K-1)\times (K-1)$ identity matrix.
    One can interpret  (\ref{eqn:idset-inverse})  as decomposing the counterfactual $\apost'\omega$ into a component anchored by $\Apre^-$ that yields  $\apost' \Apre^-\bpre$, plus a deviation term $\apost'(\boldsymbol{I}_{K-1}-\Apre^-\Apre)u$ scaled by a free parameter $u\in\mathbb{R}^{K-1}$. 
    Suppose, for example, that the policymaker prefers weighting schemes with minimal $\ell_2$-norm, i.e., $\|\omega\|_2$ enters negatively into their utility function. Then they will choose from the set $\idM$ the counterfactual corresponding to $\omega^{\ell_2}\equiv \Apre^\dagger\bpre$, where $\Apre^\dagger$ is the unique Moore-Penrose pseudoinverse of $\Apre$, and $(\boldsymbol{I}_{K-1}-\Apre^\dagger\Apre)u$ absorbs any orthogonal deviations from the minimum $\ell_2$-norm solution that are still consistent with Assumption \ref{asm:spt}. This interpretation has a connection to the SC literature using penalized regressions, including LASSO and ridge regressions that penalize $\ell_1$ and $\ell_2$ norm of the weights, respectively, or a combination of both \citep[e.g.,][]{ASS18, AAHIW21, BFR21, BFR22, CWZ21, DI16, IV23}. Different penalty terms encode different preferences over the weighting schemes. This paper makes explicit that each choice of weights can point-identify a distinct value for the counterfactual, so the choice should be carefully justified within the specific empirical context. In cases where no strong economic reasoning motivates a particular selection, the paper takes the unifying perspective under which all weighting choices satisfying SPT might be consistent with the underlying DGP.
    
    It would also be interesting to reverse this line of reasoning and ask what preferences over weighting schemes would justify a particular value of counterfactual in $\idM$ that a policymaker might choose, but I leave this question for future research.
\end{remark}

\begin{remark}[Time Weights] 
\label{rem:time-weights-A1}
While empirical settings with short panels where $T_0<K-1$ are common, a similar analysis can be applied to settings where $T_0\geq K-1$, under which the system \eqref{eqn:sle} may be inconsistent. In this case, exploring the variation across time periods is more appealing, and the policymaker may consider an alternative assumption similar to Assumption \ref{asm:spt} but with weights across pre-treatment time periods instead of control units. The analysis of these time weights is similar to the analysis of the unit weights; see Appendix \ref{appn:B:time-weights} for more discussion on time weights.
\end{remark}

\subsection{Characterizing the Identified Set via Linear Programming}
\label{subsec:LP}
Observe that $\idM$ is a closed and convex interval in $\mathbb{R}$, and therefore can be written as 
$$
\idM=[\mu_\texttt{l}, \mu_\texttt{u}],
$$
where $\mu_\texttt{l}$ and $\mu_\texttt{u}$ are the optimal values of the following linear programs:
\begin{align}
    \mu_\texttt{l} = &\hspace{.5cm}\min \,\,\apost' \omega \hspace{4cm} \mu_\texttt{u} =\hspace{.5cm}\max\,\, \apost' \omega\hspace{.95cm}\notag\\
    &\text{s.t. } \Apre\omega = \bpre, ~\boldsymbol{1}'\omega=1  \hspace{3cm} \text{s.t. } \Apre\omega = \bpre, ~\boldsymbol{1}'\omega=1\label{eqn:LP}
\end{align}
Importantly,  Assumption \ref{asm:spt} is refutable by checking whether the linear programs in \eqref{eqn:LP} are feasible; see Remark \ref{rem:test-feasible} for a discussion on testing feasibility.

At first glance, Assumption \ref{asm:spt} may seem too weak to provide sufficient identification power that yields an informative interval, as it imposes no restriction on the sign nor magnitude of the weights. Remarkably, however, Assumption \ref{asm:spt} alone can still point identify the counterfactual if the underlying DGP produces a particular low-rank trend matrix, even when this property is not explicitly imposed:
\begin{prop}
    \label{prop:idM-singleton}
\textit{
    Consider the $T_0\times K$ matrix of all trends:
    \begin{align}
    \label{eqn:trend-mat}
        \begin{bmatrix}
            \bpre & \Apre\\
            \Delta\mu_T^1(0) & \apost'
        \end{bmatrix}=\begin{bmatrix}
            \Delta\mu_2^1(0) & \Delta\mu_2^2(0) & \cdots &  \Delta\mu_2^K(0)\\
            \vdots & \vdots & \ddots & \vdots\\
            \Delta\mu_{T_0}^1(0) & \Delta\mu_{T_0}^2(0) & \cdots &  \Delta\mu_{T_0}^K(0)\\
            \Delta\mu_T^1(0) & \Delta\mu_T^2(0) & \cdots &  \Delta\mu_T^K(0)
        \end{bmatrix}.
    \end{align}
    Under Assumption \ref{asm:spt}, the first column is an affine combination of the other $(K-1)$ columns. 
    Then, $\idM$ is a singleton if and only if the last row is an affine transformation of the other $T_0$ rows; otherwise, $\idM=\mathbb{R}$.\footnote{An affine combination is a linear combination with coefficients summing to $1$, and an affine transformation is a linear combination plus a constant shift.}
}
\end{prop}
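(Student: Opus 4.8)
The plan is to read $\idM$ as the image of an affine set under a linear functional, extract the resulting dichotomy and a row-space condition on the control columns, and then use Assumption \ref{asm:spt} to translate that condition into the full-matrix statement. First I would encode the two equality constraints compactly by stacking the add-up row on top of the pre-trend system: set $\tilde A \equiv \begin{bmatrix}\boldsymbol 1' \\ \Apre\end{bmatrix}\in\mathbb{R}^{T_0\times(K-1)}$ and $\tilde b\equiv\begin{bmatrix}1\\ \bpre\end{bmatrix}$, so the feasible set of \eqref{eqn:LP} is the affine set $F=\{\omega:\tilde A\omega=\tilde b\}$, which is nonempty and contains the SPT weight $\omega^\ast$. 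Writing a generic feasible point as $\omega=\omega^\ast+n$ with $n\in\ker\tilde A$ gives $\idM=\apost'\omega^\ast+\apost'(\ker\tilde A)$. Since $\apost'(\ker\tilde A)$ is the image of a subspace under a linear map, it is a linear subspace of $\mathbb{R}$, hence either $\{0\}$ or all of $\mathbb{R}$; this is exactly the claimed dichotomy, and $\idM$ is a singleton iff $\apost'(\ker\tilde A)=\{0\}$.

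Next I would characterize the singleton case algebraically. By the above, $\idM$ is a singleton iff $\apost'n=0$ for every $n\in\ker\tilde A$, i.e.\ iff $\apost\in(\ker\tilde A)^\perp=\mathrm{range}(\tilde A')$, the row space of $\tilde A$; otherwise $\idM=\mathbb{R}$. Unpacking $\apost\in\mathrm{range}(\tilde A')$: there exist scalars $c_0,c_2,\dots,c_{T_0}$ with $\apost'=c_0\boldsymbol 1'+\sum_{t=2}^{T_0}c_t\,(\text{row }t\text{ of }\Apre)$, where the $\boldsymbol 1'$ row supplies precisely the constant shift. In words, the control portion of the last row of the trend matrix \eqref{eqn:trend-mat} is an affine transformation of the control portions of the remaining rows.

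The crux is then to lift this control-columns-only relation to a statement about the full rows of \eqref{eqn:trend-mat}, which is where Assumption \ref{asm:spt} enters. Under SPT the treated (first) column equals the affine combination of the control columns given by $\omega^\ast$, with $\boldsymbol 1'\omega^\ast=1$, simultaneously in every row including $t=T$. Applying $\omega^\ast$ to both sides of $\apost'=c_0\boldsymbol 1'+\sum_t c_t(\text{control row }t)$ yields $\Delta\mu_T^1(0)=\apost'\omega^\ast=c_0+\sum_t c_t\,\Delta\mu_t^1(0)$, so the \emph{same} coefficients $(c_0,c_t)$ reproduce the treated entry of the last row. Hence the full last row equals $\sum_t c_t(\text{full row }t)+c_0\boldsymbol 1$, an affine transformation of the other full rows. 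The converse is immediate: restricting any such full-row affine relation to the control columns recovers $\apost\in\mathrm{range}(\tilde A')$. Combining with the singleton characterization of the second paragraph gives the claim.

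The main obstacle is the bridge in the last paragraph. The null-space/row-space analysis only sees the control columns, since those are the ones appearing in $\tilde A$ and $\apost$, whereas the proposition is phrased in terms of the full matrix \eqref{eqn:trend-mat}, which includes the unobserved treated column. I expect the care to lie in verifying that SPT transfers the \emph{exact same} affine coefficients $(c_0,c_t)$ to the treated column—this hinges on the add-up constraint $\boldsymbol 1'\omega^\ast=1$, which aligns the constant shift $c_0$ across the control part and the treated part—rather than in any heavy computation.
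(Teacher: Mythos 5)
Your proof is correct, and it takes a genuinely different route from the paper's. The paper argues through linear-programming duality: viewing the max and min problems in \eqref{eqn:LP} as primals, it forms the dual $\min_{\varphi\in\mathbb{R}^{T_0}}[1~~\bpre']\varphi$ subject to $[\boldsymbol{1}~~\Apre']\varphi=\apost$, uses strong duality in both directions to equate boundedness of $\idM$ with dual feasibility (which is exactly your condition that $\apost$ lie in the row space of $\tilde A\equiv[\boldsymbol{1}~~\Apre']'$), and then needs a separate collapse step to show boundedness forces a singleton: for any fixed dual-feasible $\varphi^*$ and any primal-feasible $\omega$, $\apost'\omega=(\varphi^*)'[1~~\bpre']'$ is a constant. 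Your kernel/row-space argument delivers both conclusions at once: since the feasible set is $\omega^*+\ker\tilde A$, the set $\idM$ is a translate of the subspace $\apost'(\ker\tilde A)\subseteq\mathbb{R}$, which is either $\{0\}$ or all of $\mathbb{R}$, so the dichotomy and the characterization $\apost\in(\ker\tilde A)^\perp=\mathrm{range}(\tilde A')$ are immediate, with no appeal to a duality theorem and no separate collapse step. Your write-up also makes explicit a point the paper's proof leaves implicit: dual/row-space feasibility constrains only the control columns, and it is SPT---via $\boldsymbol{1}'\omega^*=1$ and the $t=T$ equation $\Delta\mu_T^1(0)=\apost'\omega^*$---that transports the \emph{same} coefficients $(c_0,c_t)$ to the unobserved treated column, giving the full-row statement; your converse by restriction to the control columns is likewise correct. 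What the paper's duality formulation buys in exchange is the direct link to the geometric picture in Figure \ref{fig:span} (boundedness as orthogonality of $\apost$ to the feasible affine set) and to the feasibility-testing discussion in Remark \ref{rem:test-feasible}; as a self-contained argument, yours is the more elementary of the two.
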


Proposition \ref{prop:idM-singleton} shows that $\idM$ either provides no information or point identifies the counterfactual.  The geometric intuition is simple: the linear programs in \eqref{eqn:LP} have bounded optimal values if and only if $\apost$ is orthogonal to the set of $\omega$ satisfying  $\boldsymbol{1}'\omega=1$ and $\Apre\omega=\bpre$, which are hyperplanes defined by normal vectors $\boldsymbol{1}$ and rows of $\Apre$. This orthogonality happens if and  only if $\apost$ is a linear combination of these normal vectors, i.e., when $\apost$ is in their span, since the set of valid $\omega$ is formed by intersection of hyperplanes defined by these normal vectors. That $\apost$ is a linear combination of $\boldsymbol{1}$ and rows of $\Apre$ is exactly the necessary and sufficient condition for point identification stated in Proposition \ref{prop:idM-singleton}, which under Assumption \ref{asm:spt} implies that the counterfactual trend 
$\Delta\mu_T^1(0)$ is the same linear combination of $\boldsymbol{1}$ and $\bpre$. Even if this property is not imposed explicitly but holds true in the underlying DGP, it will be effectively “learned” by the identified set, which adaptively shrinks to a singleton. Figure \ref{fig:span} illustrates this geometrically.

\begin{figure}
    \centering
    \includegraphics[width=0.5\linewidth]{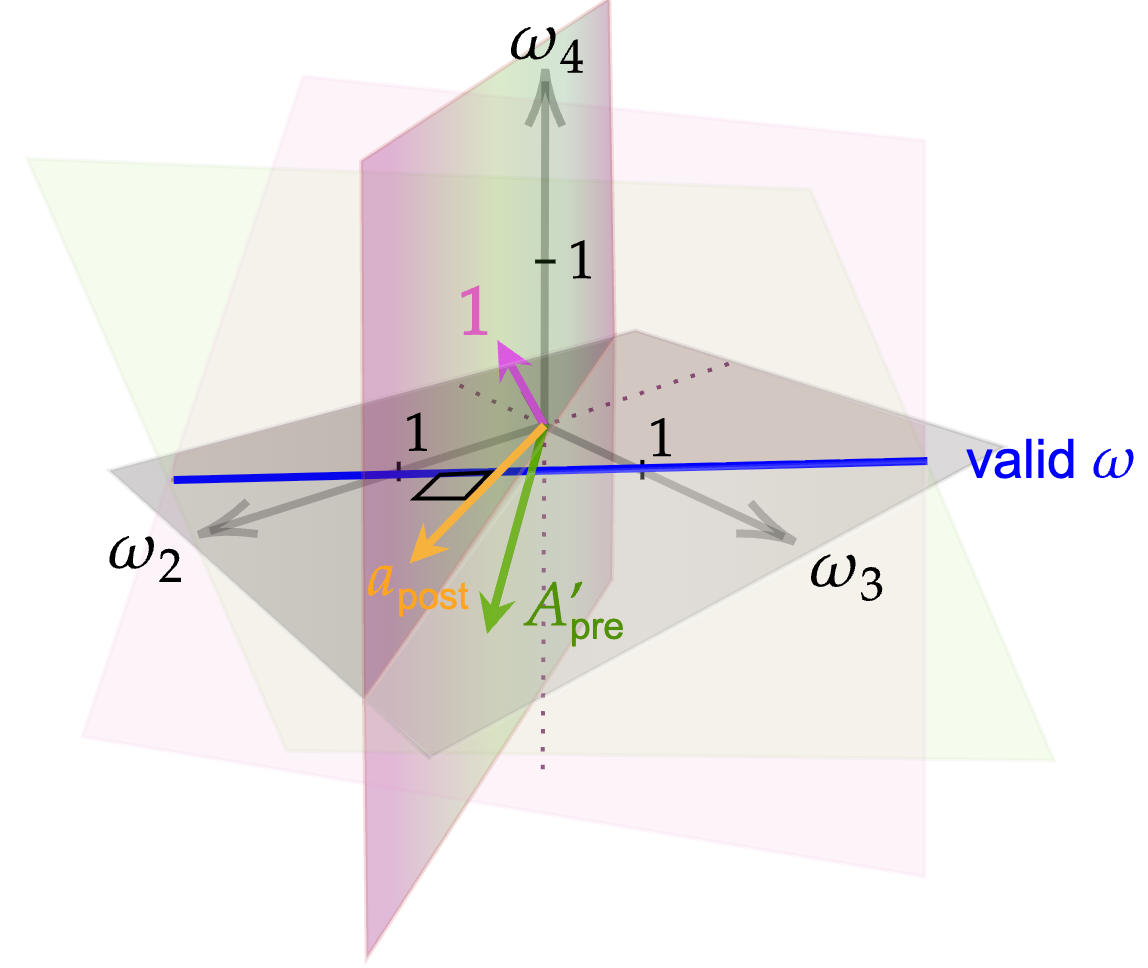}
    \caption{Necessary and sufficient condition for point identification under Assumption \ref{asm:spt}. With $T_0=2$ and $K-1=3$, $\Apre$ is a $1\times3$ row vector colored in green. The set of weights that are both affine (lying on the pink hyperplane defined by the normal vector $\boldsymbol{1}$) and satisfy $\Apre\omega=\bpre$ (lying on the green hyperplane defined by the normal vector $\Apre'$) is given by the intersection of these two hyperplanes, represented by the blue line. If $\apost$ (the yellow vector) is orthogonal to this intersection---that is, if $\apost$ is in the span of $\{\boldsymbol{1}, \Apre'\}$ (the hyperplane with a pink-to-green gradient)---then for any valid $\omega$, $\apost'\omega$ takes a finite value fixed by the distance from the blue line to the origin. Otherwise, the max and min of $\apost'\omega$ are unbounded along the blue line.}
    \label{fig:span}
\end{figure}

The class of DGPs that produce trends consistent with this low-rank structure includes those implied under a TWFE model, as shown in Example \ref{eg:TWFE-stability}.

\begin{xmpl}[Two-Way Fixed Effects]
    \label{eg:TWFE-stability}
    Suppose $\mu_t^k(0)$ is additively separable in time-$t$ and unit-$k$ fixed effects,
    \begin{align}
        \mu_t^k(0)=\lambda_t+\gamma_k,\quad\text{for all } 1\leq t\leq T, ~1\leq k\leq K,\label{eqn:TWFE}
    \end{align}
    where \eqref{eqn:TWFE} abstracts away from idiosyncratic shocks for the purpose of identification analysis. Then the period-$t$ trend $\Delta\mu_t^k(0)=(\lambda_t-\lambda_{t-1})$ is the same for all units $k$, implying that Assumption \ref{asm:spt} holds for any affine $\omega$. In this case, $\apost=[(\lambda_T-\lambda_{T_0}),\dots,(\lambda_T-\lambda_{T_0})]'$ has the same value for all its components and thus any affine $\omega$ will return the same $\apost'\omega=(\lambda_T-\lambda_{T_0})$, thereby point-identifying $\Delta\mu_T^1(0)=(\lambda_T-\lambda_{T_0})$. Note that in this example, since all units have the same trends in any time period, both $\apost$ and the rows of $\Apre$ are scalar multiples of $\boldsymbol{1}$, and it is straightforward to verify that $\apost$ is in the span of $\boldsymbol{1}$ and the row space of $\Apre$.
\end{xmpl}

Affine weights have been used in settings where homogeneity across units is imposed; see, for example, Remark 2.1 in \citet{CSX25}. In a similar spirit, the low-rank property in Proposition \ref{prop:idM-singleton} can be interpreted as homogeneity across time periods, under which post-trends are representable by pre-trends through an affine transformation. However, this property can easily break down. Suppose the underlying DGP deviates from the TWFE model in \eqref{eqn:TWFE} in a way that a unit-heterogeneous term $m_k$ enters only in the last period $T$:
\begin{align*}
    \mu_t^k(0)=\begin{cases}
        \lambda_t+\gamma_k & \text{if $t\leq T_0$,}\\
        \lambda_T+\gamma_k +m_k & \text{otherwise.}
    \end{cases}
\end{align*}
Then $\apost$ is no longer a scalar multiple of $\boldsymbol{1}$ anymore and falls outside of the span of $\boldsymbol{1}$ and $\Apre$, even if $\{m_k\}_{k=1}^K$ are infinitesimal in magnitude. In this case, $\idM=\mathbb{R}$ and what TWFE identifies can be arbitrarily wrong depending on the exact values of $\{m_k\}_{k=1}^K$.

Whether the trend matrix \eqref{eqn:trend-mat} exhibits the low-rank structure in Proposition \ref{prop:idM-singleton} has testable implications: one can test whether there exists $\varphi\in\mathbb{R}^{T_0}$ such that $\apost=[\boldsymbol{1}~~\Apre']\varphi$. This is similar to testing whether Assumption \ref{asm:spt} holds (or equivalently, whether the linear programs in \eqref{eqn:LP} are feasible); see Remark \ref{rem:test-feasible}. If the underlying DGP does not have this low-rank property, a natural next step is to consider more credible assumptions that restrict the counterfactual to a nontrivial, informative set. A common approach in the causal inference literature where estimands take a weighted average form is imposing nonnegativity on the weights. 
Combined with Assumption \ref{asm:spt}, this nonnegativity constraint ensures that the treated unit’s trend lies in the convex hull of the control trends across all time periods. 
While convexity is often motivated by interpretability and avoiding extrapolation, it also provides \textit{identifying power} by restricting the counterfactual trend to lie between the minimum and maximum of control post-trends, yielding an identified set that is a bounded subset of $\idM$ given by
\begin{align}
    \idM^+\equiv\bigg\{\apost'\omega:\omega\in\mathbb{R}_+^{K-1}, \boldsymbol{1}'\omega=1, \Apre\omega=\bpre\bigg\}=[\mu_\texttt{l}^+, \mu_\texttt{u}^+],\label{eqn:id-set+}
\end{align}
where $\mu_\texttt{l}^+$ and $\mu_\texttt{u}^+$ are the optimal values of the following linear programs with nonnegativity constraint $\omega\geq0$:
\begin{align}
        &\mu_\texttt{l}^+ = \hspace{.5cm}\min \,\,\apost' \omega \hspace{4cm} \mu_\texttt{u}^+ =\hspace{.5cm}\max\,\, \apost' \omega\hspace{.95cm}\notag\\
    &\hspace{1cm}\text{s.t.  \small $\Apre\omega = \bpre, ~\boldsymbol{1}'\omega=1, ~\omega\geq0$}  \hspace{2.2cm} \text{s.t.  \small $\Apre\omega = \bpre, ~\boldsymbol{1}'\omega=1, ~\omega\geq0$}\label{eqn:LP+}
\end{align}

The SC literature typically focuses on nonnegative weights for its interpretability and sparsity under certain conditions \citep[see][for a review]{Abadie21}. It should be emphasized that such nonnegativity constraint is an identifying assumption that shapes the identification region of the counterfactual. This is a refutable assumption if \eqref{eqn:LP+} is infeasible but \eqref{eqn:LP} is feasible. In particular, convexity may be rejected if the treated unit's trend is systematically larger or smaller than all control trends, causing $\Delta\mu_t^1(0)$ to lie outside of the convex hull of $\{\Delta\mu_t^2(0),\dots,\Delta\mu_t^K(0)\}$. 

In what follows, I draw connections between $\idM^+$ and widely-used empirical strategies.

\subsection{Connection to Difference-in-Differences under Parallel Trends}
\label{subsec:PT}
Assumption \ref{asm:spt} relaxes the canonical two-timing-group parallel trends assumption extended to multiple periods, which requires that, absent treatment, the outcome evolution of the eventually-treated group is the same as that of the never-treated group. Formally, for all $t\in\{2,...,T\}$, under the panel data setting in Example \ref{eg:panel},
\begin{align}
    \mathbb{E}[Y_{it}(0)-Y_{it-1}(0)|D_{iT}=1]=\mathbb{E}[Y_{it}(0)-Y_{it-1}(0)|D_{iT}=0],\label{eqn:pt}
\end{align}
and under the repeated cross-section setting in Example \ref{eg:rcs},
\begin{align}
    &\mathbb{E}[Y_i(0)|D_i=1,T_i=t]-\mathbb{E}[Y_i(0)|D_i=1,T_i=t-1]\notag\\
    =&\mathbb{E}[Y_i(0)|D_i=0,T_i=t]-\mathbb{E}[Y_i(0)|D_i=0,T_i=t-1].\label{eqn:pt-rcs}
\end{align}
Note that the TWFE model in \eqref{eqn:TWFE} differs from the two-timing-group parallel trends assumption in \eqref{eqn:pt}-\eqref{eqn:pt-rcs}: the former implicitly imposes a separate parallel trends assumption between the treated unit and each of the control units, whereas the latter defines comparison groups by treatment timing and pools all never-treated units into a single control group. The two coincide when there is only one control unit and one post-treatment period. Otherwise, the assumption imposed by TWFE is stronger, and as noted by \citet{CSX25}, implies \textit{over-identification} restrictions, which they leverage for efficiency gains when aggregate units are defined by treatment timing groups in a staggered-adoption setting.

Recall in Examples \ref{eg:panel}-\ref{eg:rcs}, $G_i\in\{1,...,K\}$ denotes the aggregate unit (e.g., state) to which individual $i$ belongs. Assume (i) in the panel data setting, $G_i$ does not vary across time (e.g., individual $i$ does not relocate over the sampling periods) and (ii) in the repeated cross-section setting, the share of observations from each control unit $k$ among all control observations is time-invariant, i.e., $\mathbb{P}(G_i=k|D_i=0,T_i=t)=\mathbb{P}(G_i=k|D_i=0)$ for all $k\geq 2$. If the policymaker believes that the parallel trends assumption \eqref{eqn:pt} holds under the panel data setting, then by the law of iterated expectation,
\begin{align}
    &\underbrace{\mathbb{E}[Y_{it}(0)-Y_{it-1}(0)|D_{iT}=1]}_{=\Delta\mu_t^1(0)}\notag\\
    =&\sum_{k=2}^K\underbrace{\mathbb{E}[Y_{it}(0)-Y_{it-1}(0)|D_{iT}=0, G_i=k]}_{=\Delta\mu_t^k(0)}\underbrace{\mathbb{P}(G_i=k|D_{iT}=0)}_{\equiv\omega_k^{\texttt{PT}}}.\label{eqn:PT-weights}
\end{align}
Therefore, the parallel trends assumption \eqref{eqn:pt} implicitly \emph{selects} a particular weighting scheme, namely $\omega^{\texttt{PT}}\equiv[\omega_2^{\texttt{PT}},...,\omega_K^{\texttt{PT}}]'$ defined in \eqref{eqn:PT-weights}, as the solution to the system of linear equations \eqref{eqn:sle}. These weights correspond to the population shares of each control unit among all control units, which are nonnegative and sum to $1$. 
An immediate implication of parallel trends is that the counterfactual trend identified by $\omega^{\texttt{PT}}$ should fall inside $\idM^+$: 
\begin{align}
    \underbrace{\mathbb{E}[Y_{iT}-Y_{iT_0}|D_{iT}=0]}_{=\apost'\omega^{\texttt{PT}}}\in\idM^+.\label{eqn:did-estimand}
\end{align} 
This is a testable implication, and testing whether \eqref{eqn:did-estimand} holds is conceptually equivalent to testing violations of parallel pre-trends commonly used in empirical research: observe that, by definition of $\idM^+$, \eqref{eqn:did-estimand} holds if and only if $\Apre\omega^{\texttt{PT}}=\bpre$, which is equivalent to the parallel trends assumption in \eqref{eqn:pt} holding in all pre-treatment periods.

An expression analogous to \eqref{eqn:PT-weights} can be derived given repeated cross-sectional (RCS) data, whose version of the parallel trends assumption \eqref{eqn:pt-rcs} implies
{\small{
\begin{align}
    &\underbrace{\mathbb{E}[Y_i(0)|D_i=1, T_i=t]-\mathbb{E}[Y_i(0)|D_i=1, T_i=t-1]}_{=\Delta\mu_t^1(0)}\label{eqn:pt-rcs-weight}\\
    =&\sum_{k=2}^K\underbrace{\big(\mathbb{E}[Y_i(0)|D_i=0, T_i=t, G_i=k]-\mathbb{E}[Y_i(0)|D_i=0, T_i=t-1, G_i=k]\big)}_{=\Delta\mu_t^k(0)}\underbrace{\mathbb{P}(G_i=k|D_i=0)}_{\omega_k^{\texttt{PT,RCS}}}.\notag
\end{align}}}And a similar analysis can be applied to the repeated cross-section setting. In what follows, I focus on the panel data setting and $\omega^\texttt{PT}$ in \eqref{eqn:PT-weights} for simplicity.

The relaxation of parallel trends allowed by Assumption \ref{asm:spt} can be interpreted as a set of restrictions that regulate how much post-treatment violations of parallel trends can deviate from their pre-treatment counterparts through the lens of the partial identification framework proposed by \citet{MP18} and \cite{RR23}. Using the notation of \citet{RR23}, denote the difference in trends by $\delta$, where 
{\small
\begin{align*}
    \delta\equiv\begin{bmatrix}
        \delta_{\texttt{pre}}\\
        \delta_{\texttt{post}}
    \end{bmatrix}, ~\delta_{\texttt{pre}}\equiv\begin{bmatrix}
        \Delta\mu_2^1(0)-\sum_{k=2}^K\omega_k^{\texttt{PT}}\Delta\mu_2^k(0)\\
        \vdots\\
        \Delta\mu_{T_0}^1(0)-\sum_{k=2}^K\omega_k^{\texttt{PT}}\Delta\mu_{T_0}^k(0)
    \end{bmatrix}, ~\delta_{\texttt{post}}\equiv\Delta\mu_T^1(0)-\sum_{k=2}^K\omega_k^{\texttt{PT}}\Delta\mu_T^k(0),
\end{align*}}i.e., $\delta_{\texttt{pre}}$ and $\delta_{\texttt{post}}$ are, respectively, the pre- and post-treatment differences in trends. Then the set of possible violations of parallel trends allowed by Assumption \ref{asm:spt} (SPT) with nonnegative weights is given by
{\small
\begin{align}
    \Delta^{\texttt{SPT}}\equiv\left\{\delta=\underbrace{\begin{bmatrix}
        \sum_{k=2}^K(\omega-\omega^{\texttt{PT}})_k\Delta\mu_2^k(0)\\
        \vdots\\
        \sum_{k=2}^K(\omega-\omega^{\texttt{PT}})_k\Delta\mu_{T_0}^k(0)\\
        \sum_{k=2}^K(\omega-\omega^{\texttt{PT}})_k\Delta\mu_T^k(0)
    \end{bmatrix}}_{\equiv\beta(\omega)}:\omega\in\mathbb{R}_+^{K-1}, \boldsymbol{1}'\omega=1,\Apre\omega=\bpre\right\},\label{eqn:polyhedron}
\end{align}}which collects all possible differences between the trend of the never-treated group, $\sum_{k=2}^K\omega^{\texttt{PT}}\Delta\mu_t^k(0)$, and a point from the convex hull of control trends that exactly matches the treated unit’s pre-treatment trends. Parallel trends implies $0\in\Delta^{\texttt{SPT}}$, i.e., $\omega^{\texttt{PT}}$ produces a valid convex combination of control trends equal to the treated unit's trend in pre-periods. 

$\Delta^{\texttt{SPT}}$ is a new set of relaxations non-nested with the proposals in \citet{RR23} that use violations of parallel pre-trends to bound violations of parallel post-trends. In particular, under SPT one may have parallel pre-trends if there is a convex $\omega\neq\omega^{\texttt{PT}}$ such that the first $T_0$ elements of $\beta(\omega)$ in \eqref{eqn:polyhedron} are $0$ but a non-parallel post-trend such that the last element of $\beta(\omega)$ is non-zero, in which case the bounding approach of \citet{RR23} would imply no violation of parallel post-trends given that all pre-trends are parallel. In addition, the inference method of \citet{RR23} does not apply to $\idM^+$. Although $\Delta^{\texttt{SPT}}$ is polyhedral in $\delta$---i.e., it can be written as $\{\delta:B\delta\leq \beta\}$ for matrix $B=[\boldsymbol{I}_{T_0}; -\boldsymbol{I}_{T_0}]'$, where $\boldsymbol{I}_{T_0}$ is the $(T_0\times T_0)$ identity matrix, and vector $\beta=[\beta(\omega)'; -\beta(\omega)']'$ for $\beta(\omega)$ defined in \eqref{eqn:polyhedron}---the dependence of $\beta(\omega)$ on the nuisance parameter $\omega$ subject to the constraint $\Apre \omega = \bpre$ places the problem outside the scope of \citet{RR23}. Their inference method builds on \citet{ARP23} and requires that the variance of the moment constraints does not depend on the nuisance parameter. This is not the case here, since $\omega$ is multiplied by $\Apre$, which needs to be estimated, and thus the variance of the implied moments depends on $\omega$; see \eqref{eqn:moment-eq-compact}. In Section~\ref{sec:inf}, I propose an alternative inference method.

\subsection{Connection to Synthetic Control Methods}
\label{sec:sc}
In the SC literature, a common assumption is that, absent the treatment, the \textit{sample} realization of the population outcome $\mu_t^k(0)$ that the policymaker would have observed, denoted by $\mu_t^{k,\texttt{s}}(0)$---with the superscript “$\texttt{s}$” indicating it is a sample quantity and thus stochastic---follows a linear factor model:
\begin{align}
    \mu_t^{k,\texttt{s}}(0)=\lambda_t'\gamma_k+\epsilon_{kt}, \label{eqn:lfm}
\end{align}
where $\lambda_t\in\mathbb{R}^F$ is a vector of latent time-varying factors, $\gamma_k\in\mathbb{R}^F$ is a vector of latent unit-specific loadings, and $\epsilon_{kt}\in\mathbb{R}$ is a mean-zero exogenous shock.\footnote{The additively separable unit and time fixed effects model in Example~\ref{eg:TWFE-stability} is a special case---with some abuse of notation---corresponding to a time factor $[\lambda_t,~1]'$ and a unit loading $[1,~\gamma_k]'$, where both $\gamma_k$ and $\lambda_t$ are scalars.} In this case, the population counterpart of $\mu_t^{k,\texttt{s}}(0)$ is given by the structural component of the factor model (\ref{eqn:lfm}),
$$\mu_t^k(0)=\mathbb{E}\big[\mu_t^{k,\texttt{s}}(0)\big]=\mathbb{E}[\lambda_t'\gamma_k],$$
where the expectation is taken over the joint distribution of $(\lambda_t,\gamma_k,\epsilon_{kt})$. Below I discuss the identifying assumptions behind the original SC method of \citetalias{ADH10} and one of its many variants. They differ in the assumption of whether the unit weights balance sample aggregate outcomes $\mu_t^{k,\texttt{s}}(0)$---i.e., the “perfect pre-treatment match” assumption stated in Assumption \ref{asm:ADH}(i)---or latent factors and loadings $\lambda_t'\gamma_k$ only. The former assumption offers transparency, as sample aggregate outcomes are directly observed in the data, but is difficult to satisfy in practice, as discussed in Remark \ref{rem:PPM-fail}. In contrast, the latter assumption sidesteps this challenge by imposing conditions on the unobservables, but at the cost of reduced transparency and verifiability. In what follows, I refer to both time factors and unit loadings as “factors” for simplicity and assume they are bounded in magnitude.

\subsubsection{Synthetic Controls with Perfect Pre-Treatment Match}
\label{subsec:sc-ppm}
Providing the first formal results in the SC literature, \citetalias{ADH10} impose the following assumptions on the factor model \eqref{eqn:lfm}, where $\widehat{\omega}^{\texttt{SC}}$ in Assumption \ref{asm:ADH}(i) denotes the SC weights, with the hat highlighting the fact that these weights depend on sample aggregate outcomes and are therefore stochastic:

\begin{asm}[\citetalias{ADH10}-SC Assumptions]
    \label{asm:ADH}
    \textit{
    \begin{enumerate}[(i)]
        \item There exists a set of convex weights $\widehat{\omega}^{\texttt{SC}}=\left[\widehat{\omega}_2^{\texttt{SC}},...,\widehat{\omega}_K^{\texttt{SC}}\right]'\geq 0$ such that $\boldsymbol{1}'\widehat{\omega}^{\texttt{SC}}=1$ and for all pre-treatment periods $t\leq T_0$,
        \begin{align}
            \mu_t^{1,\texttt{s}}(0)=\sum_{k=2}^K\widehat{\omega}_k^{\texttt{SC}}\mu_t^{k,\texttt{s}}(0).\label{eqn:PPM}
        \end{align}
        \item $\epsilon_{kt}$ is independent across units and time periods with $\mathbb{E}[\epsilon_{kt}]=0$ for all $k\in\{1,...,K\}$ and $t\in\{1,...,T\}$; for $k\geq 2$ and $t\leq T_0$, $\mathbb{E}[|\epsilon_{kt}|^p]<\infty$ for some even $p>2$; the smallest eigenvalue of $\frac{1}{T_0}\sum_{t=1}^{T_0}\lambda_t\lambda_t'$ is bounded below by $\underline{\xi}>0$; $|\lambda_{tf}|\leq\overline{\lambda}<\infty$ for all $f\in\{1,...,F\}$ and $t\in\{1,...,T\}$.
    \end{enumerate}
    }
\end{asm}

Under the linear factor model \eqref{eqn:lfm} and Assumption \ref{asm:ADH}, \citetalias{ADH10} show that the set of weights $\widehat{\omega}^{\texttt{SC}}$ in Assumption \ref{asm:ADH}(i) that balances pre-treatment sample aggregate outcomes between the treated unit and the control units (“perfect pre-treatment match” henceforth) carries over to the post-treatment period $T$, had unit $1$ not implemented the treatment:
\begin{align}
     \text{as }\,\, T_0\to\infty,\quad\left|\mathbb{E}\left[\sum_{k=2}^K\widehat{\omega}_k^{\texttt{SC}}\mu_T^{k,\texttt{s}}\right]-\mu_T^1(0)\right|\to 0,\label{eqn:scunbiased}
\end{align}
i.e., the \citetalias{ADH10}-SC estimator $\sum_{k=2}^K\widehat{\omega}_k^{\texttt{SC}}\mu_T^{k,\texttt{s}}$ is asymptotically unbiased for the counterfactual $\mu_T^1(0)=\mathbb{E}[\lambda_T'\gamma_1]$ as the number of pre-treatment periods $T_0$ grows. The next result shows that the identifying assumption of \citetalias{ADH10} is a special case of SPT with convex weights, though in an asymptotic sense as the \citetalias{ADH10}-SC method is only asymptotically valid. In the setting where $T_0$ grows, the dimension of the $(T_0-1)\times(K-1)$ pre-trends matrix $\Apre$ changes along the sequence $T_0\to\infty$. The following regularity assumption on the asymptotic behavior of $\Apre$ guarantees the identified set $\idM^+$ in \eqref{eqn:id-set+} is eventually nonempty and that the spectral norm of its Moore-Penrose inverse, $\|\Apre^\dagger\|$, does not diverge.

\begin{asm}[Asymptotic Behavior of $\Apre$]\label{asm:asymp-Apre}
   \textit{ Along the sequence $T_0\to\infty$, eventually the system $\Apre\omega=\bpre$  has a convex solution and the smallest singular value of $\Apre$ is bounded below by a strictly positive constant.}
\end{asm}

\begin{prop}
\label{prop:scid}
    \textit{
    Let Assumption \ref{asm:ADH} hold with $\mathbb{E}[|\epsilon_{1t}|^p]<\infty$ for some even $p>2$ and $t\leq T_0$.\footnote{While \cite{ADH10} only require “$\mathbb{E}[|\epsilon_{kt}|^p]<\infty$ for some even $p>2$” to hold for the control units ($k\geq 2$) as in Assumption \ref{asm:ADH}(ii), their main goal is to show \textit{post-treatment} asymptotic unbiasedness of the SC estimator, where the period-$T$ shock of the treated unit $\epsilon_{1T}$ enters the bias term with mean-zero and thus only the control shocks remain; see their $R_{2t}$ term on p. 504. On the other hand, Proposition \ref{prop:scid} shows the validity 
    of the weights in terms of $L^1$-norm, where pre-treatment shocks of the treated unit show up in the bias (see Eq. \eqref{eqn:pthmoment-trt}), and therefore the bounded $p$-th moment condition is also needed for $k=1$.} Then under deterministic factors $\{\lambda_t'\gamma_k\}_{t\in[T],k\in[K]}$,  as $T_0\to\infty$, 
    \begin{align*}
        \Apre\mathbb{E}\left[\widehat{\omega}^{\texttt{SC}}\right]-\bpre\,\xrightarrow[]{}\, \boldsymbol{0}\quad \text{and} \quad \apost' \mathbb{E}\left[\widehat{\omega}^{\texttt{SC}}\right]\,\xrightarrow[]{}\, \Delta\mu_T^1(0)=(\lambda_T-\lambda_{T_0})'\gamma_1,
    \end{align*}
    implying that under Assumption \ref{asm:asymp-Apre}, $\inf_{\delta\in\idM^+}\big|(\lambda_T-\lambda_{T_0})'\gamma_1-\delta\big|\to 0$. 
    }
\end{prop}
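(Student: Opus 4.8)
The plan is to collapse the entire statement onto a single $F$-dimensional object, the \emph{loading discrepancy} $\widehat{\nu}\equiv\gamma_1-\sum_{k=2}^K\widehat{\omega}_k^{\texttt{SC}}\gamma_k$, and then bound its expected norm by a law-of-large-numbers argument in the spirit of \citetalias{ADH10}. Under deterministic factors we have $\mu_t^k(0)=\lambda_t'\gamma_k$, so every entry of $\Apre$, $\bpre$, and $\apost$ is an inner product of a $\lambda$-difference with a loading. Substituting the factor model \eqref{eqn:lfm} into the perfect-match identity \eqref{eqn:PPM} gives, for each pre-period $t\leq T_0$, the scalar relation $\lambda_t'\widehat{\nu}=\sum_{k=2}^K\widehat{\omega}_k^{\texttt{SC}}\epsilon_{kt}-\epsilon_{1t}$. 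The same factorization yields the exact algebraic identities $(\Apre\widehat{\omega}^{\texttt{SC}}-\bpre)_t=-(\lambda_t-\lambda_{t-1})'\widehat{\nu}$ and $\apost'\widehat{\omega}^{\texttt{SC}}-\Delta\mu_T^1(0)=-(\lambda_T-\lambda_{T_0})'\widehat{\nu}$. Both displayed limits are thus linear images of $\widehat{\nu}$, and since the factors are bounded by $\overline{\lambda}$, it suffices to prove $\mathbb{E}\|\widehat{\nu}\|\to0$ and then invoke Jensen's inequality $\|\mathbb{E}\widehat{\nu}\|\leq\mathbb{E}\|\widehat{\nu}\|$.

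To bound $\mathbb{E}\|\widehat{\nu}\|$, I would multiply the scalar relation by $\lambda_t$ and average over $t\leq T_0$, obtaining $M_{T_0}\widehat{\nu}=\frac{1}{T_0}\sum_{t=1}^{T_0}\lambda_t\big(\sum_{k=2}^K\widehat{\omega}_k^{\texttt{SC}}\epsilon_{kt}-\epsilon_{1t}\big)$ with $M_{T_0}\equiv\frac{1}{T_0}\sum_{t=1}^{T_0}\lambda_t\lambda_t'$. The eigenvalue floor $\underline{\xi}$ in Assumption \ref{asm:ADH}(ii) makes $M_{T_0}$ invertible with $\|M_{T_0}^{-1}\|\leq\underline{\xi}^{-1}$, so $\|\widehat{\nu}\|\leq\underline{\xi}^{-1}\big\|\frac{1}{T_0}\sum_{t=1}^{T_0}\lambda_t(\sum_k\widehat{\omega}_k^{\texttt{SC}}\epsilon_{kt}-\epsilon_{1t})\big\|$. \emph{This is the step I expect to be the main obstacle}: the weights $\widehat{\omega}^{\texttt{SC}}$ are themselves functions of the shocks, so the expectation does not factor across weights and shocks. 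The device that breaks the dependence is convexity: writing $Z_{kf}\equiv\frac{1}{T_0}\sum_t\lambda_{tf}\epsilon_{kt}$, the bound $|\sum_k\widehat{\omega}_k^{\texttt{SC}}Z_{kf}|\leq\max_k|Z_{kf}|$ decouples the random weights from the shock averages. Each $Z_{kf}$ is a normalized sum of independent mean-zero shocks with bounded $p$-th moment, so a Rosenthal-type inequality gives $\mathbb{E}|Z_{kf}|^p=O(T_0^{-p/2})$, and a moment bound over the fixed number of units yields $\mathbb{E}\max_k|Z_{kf}|=O(T_0^{-1/2})$. The extra hypothesis $\mathbb{E}|\epsilon_{1t}|^p<\infty$ is precisely what is needed to include the treated-unit term $Z_{1f}$, which—unlike in the post-treatment bias analysis of \citetalias{ADH10}—does not average away here. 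Summing over the $F$ coordinates gives $\mathbb{E}\|\widehat{\nu}\|=O(T_0^{-1/2})\to0$, which delivers both displayed limits.

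For the final implication, note that $\bar{\omega}\equiv\mathbb{E}[\widehat{\omega}^{\texttt{SC}}]$ is convex, since taking expectations preserves $\omega\geq0$ and $\boldsymbol{1}'\omega=1$, and it is approximately feasible: $\Apre\bar{\omega}-\bpre\to\boldsymbol{0}$ while $\apost'\bar{\omega}\to\Delta\mu_T^1(0)$. What remains is to convert this into an exactly feasible point of $\idM^+$. Assumption \ref{asm:asymp-Apre} guarantees $\idM^+$ is eventually nonempty and that $\|\Apre^{\dagger}\|=1/\sigma_{\min}(\Apre)$ is controlled, so I would correct $\bar{\omega}$ to the nearest $\omega^*$ satisfying $\Apre\omega^*=\bpre$ and $\boldsymbol{1}'\omega^*=1$, bounding the value change by the Hoffman-type estimate $|\apost'(\omega^*-\bar{\omega})|\leq\|\apost\|\,\|\Apre^{\dagger}\|\,\|\Apre\bar{\omega}-\bpre\|\to0$, and would restore $\omega^*\geq0$ by blending with a fixed feasible convex point from Assumption \ref{asm:asymp-Apre} using a vanishing mixing weight. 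A secondary technical point is that this requires the Euclidean residual $\|\Apre\bar{\omega}-\bpre\|$ (not merely its coordinates) to vanish; this holds because the \emph{bias} $\|\mathbb{E}\widehat{\nu}\|$ is of strictly smaller order than the mean deviation $\mathbb{E}\|\widehat{\nu}\|$—the SC weights correlate only weakly with the contemporaneous shocks—so the error does not accumulate across the growing number of pre-periods. Combining with $\apost'\bar{\omega}\to\Delta\mu_T^1(0)$ then gives $\apost'\omega^*\to\Delta\mu_T^1(0)$ with $\omega^*\in\idM^+$, i.e. $\inf_{\delta\in\idM^+}\big|(\lambda_T-\lambda_{T_0})'\gamma_1-\delta\big|\to0$.
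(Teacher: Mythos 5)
For the two displayed limits, your argument is in substance the paper's own proof. Your identity $M_{T_0}\widehat{\nu}=\frac{1}{T_0}\sum_{t\leq T_0}\lambda_t\bigl(\sum_{k\geq2}\widehat{\omega}_k^{\texttt{SC}}\epsilon_{kt}-\epsilon_{1t}\bigr)$ is the paper's equation \eqref{eqn:noise-pre} rewritten (since $(\lambda_{\texttt{P}}'\lambda_{\texttt{P}})^{-1}\lambda_{\texttt{P}}'=M_{T_0}^{-1}\tfrac{1}{T_0}\lambda_{\texttt{P}}'$), your max-over-controls bound $\bigl|\sum_k\widehat{\omega}_k^{\texttt{SC}}Z_{kf}\bigr|\leq\max_{k\geq2}|Z_{kf}|$ plays exactly the role of the paper's H\"older/Jensen decoupling step, and both arguments then invoke Rosenthal's inequality and the added $p$-th moment condition for $k=1$ to obtain the $O(T_0^{-1/2})$ rate.

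The final implication is where you depart from the paper, and your repair of nonnegativity does not work. The paper applies the Hoffman error bound directly to the polyhedron $\mathcal{W}\equiv\{\omega\in\Delta^{K-2}:\Apre\omega=\bpre\}$, which already contains the simplex constraints: since $\bar{\omega}\equiv\mathbb{E}[\widehat{\omega}^{\texttt{SC}}]\in\Delta^{K-2}$ exactly, the only violated constraints are the equalities, and Hoffman's bound (which requires only nonemptiness of the polyhedron, no interior point) yields $\inf_{\omega\in\mathcal{W}}\|\bar{\omega}-\omega\|\lesssim\|\Apre^\dagger\|\,\|\Apre\bar{\omega}-\bpre\|$, after which Cauchy--Schwarz with $\apost$ finishes. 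You instead project onto the affine set $\{\omega:\Apre\omega=\bpre,\ \boldsymbol{1}'\omega=1\}$ and then propose to restore $\omega^*\geq0$ by blending with a fixed feasible convex point using a vanishing mixing weight. This step fails: if some coordinate of $\omega^*$ equals $-\delta_n<0$ and the feasible point $\omega^{\texttt{feas}}\in\mathcal{W}$ has that same coordinate equal to $0$---which Assumption \ref{asm:asymp-Apre} does not rule out, since it guarantees only the existence of a convex solution, not one bounded away from the boundary of the simplex---then $(1-\theta)\omega^*+\theta\,\omega^{\texttt{feas}}$ has that coordinate equal to $-(1-\theta)\delta_n<0$ for every $\theta<1$. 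Blending repairs nonnegativity only if $\omega^{\texttt{feas}}$ is coordinatewise bounded away from zero, which you do not have. The fix is to do what the paper does: apply Hoffman to $\mathcal{W}$ with the nonnegativity constraints built into the polyhedron.

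A second, smaller issue concerns your ``secondary technical point.'' You are right to notice that, because the residual $\Apre\bar{\omega}-\bpre$ has $T_0-1$ coordinates, the coordinatewise rate $O(T_0^{-1/2})$ by itself only gives $\|\Apre\bar{\omega}-\bpre\|=O(1)$, so something more is needed before invoking Hoffman. But your resolution---that the bias $\|\mathbb{E}\widehat{\nu}\|$ is of strictly smaller order because the SC weights correlate only weakly with the contemporaneous shocks---is an assertion, not a proof: it amounts to a covariance bound of the form $\sup_{k,t}|\mathrm{Cov}(\widehat{\omega}_k^{\texttt{SC}},\epsilon_{kt})|=O(T_0^{-1})$, and nothing in Assumption \ref{asm:ADH} or in your argument delivers it, since $\widehat{\omega}^{\texttt{SC}}$ has no closed form. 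For what it is worth, the paper's own proof passes from the coordinatewise statement \eqref{eqn:pre-valid-sc} to $\|\Apre\omega^{\texttt{SC}}-\bpre\|\to0$ without comment, so this is a subtlety you share with (and flag more explicitly than) the paper's argument; but as written your claim remains unproven.
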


The proof of Proposition \ref{prop:scid} requires nonrandom factors $\lambda_t'\gamma_k$ to show that the \citetalias{ADH10}-SC weight $\widehat{\omega}^{\texttt{SC}}$ in \eqref{eqn:PPM} asymptotically satisfies Assumption \ref{asm:spt} by separating the nonstochastic part $\mathbb{E}[\lambda_t'\gamma_k]=\mu_t^k(0)$ from the expectation of the stochastic weighted sum $\mathbb{E}\big[\sum_{k=2}^K\widehat{\omega}_k^{\texttt{SC}}\lambda_t'\gamma_k\big]$. Deterministic factors are commonly assumed in the SC literature, \citep[e.g.,][]{AAHIW21, BFR21, F21, SBF25}, under which the expectation of the \citetalias{ADH10}-SC weights asymptotically recovers the counterfactual trend under the factor model, whose distance to the identified set under SPT with convex weights converges to $0$. Importantly, regardless of whether the SC assumptions are satisfied, $\idM$ and $\idM^+$ are well-defined objects under Assumption \ref{asm:spt} and do not rely on any particular functional form of the outcome.

\begin{remark}[Violation of Perfect Pre-Treatment Match]
\label{rem:PPM-fail}
    Proposition \ref{prop:scid} shows that the \citetalias{ADH10}-SC weight $\widehat{\omega}^{\texttt{SC}}$ asymptotically “solves” the system of equations $\Apre\omega=\bpre$ and recovers the latent factors of the treated unit. There, Assumption \ref{asm:ADH}(i)---perfect pre-treatment match---plays a key role in ensuring the validity of the \citetalias{ADH10}-SC estimator. However,
as also noted by \citetalias{ADH10} (p. 495), “it is often the case that no set of weights exists such that” Assumption \ref{asm:ADH}(i) is satisfied exactly in sample. In practice, $\widehat{\omega}^{\texttt{SC}}$ is estimated from data via solving the following constrained optimization problem, where $\Delta^{K-2}\equiv\{\omega\in\mathbb{R}_+^{K-1}:~\boldsymbol{1}'\omega=1\}$ denotes the simplex in $\mathbb{R}^{K-1}$:
\begin{align}
    \widehat{\omega}^{\texttt{SC}}\in\arg\min_{\omega\in\Delta^{K-2}}~~\sum_{t=1}^{T_0}\left(\mu_t^{1,\texttt{s}}(0)-\sum_{k=2}^K\omega_k\mu_t^{k,\texttt{s}}(0)\right)^2.\label{eqn:scobj}
\end{align}
For $\widehat{\omega}^{\texttt{SC}}$ to achieve perfect pre-treatment match, the objective function (\ref{eqn:scobj}) needs to be exactly $0$ at $\widehat{\omega}^{\texttt{SC}}$, a condition that often fails in practice \citep[see also][for a discussion on imperfect pre-treatment match]{FP21}. In addition, the bias of the \citetalias{ADH10}-SC estimator in \eqref{eqn:scunbiased} only vanishes in the limit as $T_0\to\infty$, yet perfect pre-treatment match becomes even more demanding as the number of pre-periods increases. To see this under Assumption \ref{asm:ADH} only,
let $\Lambda\equiv[\lambda_1,...,\lambda_{T_0}]$ and $\Gamma\equiv[\gamma_1,...,\gamma_K]$ stack the latent pre-treatment time and unit factors, respectively. Define the event that perfect pre-treatment match is satisfied in a particular $t\leq T_0$:
\begin{align*}
    E_t\equiv\left\{\exists\omega\in\Delta^{K-2}: \mu_t^{1,\texttt{s}}(0)=\sum_{k=2}^K\omega_k\mu_t^{k,\texttt{s}}(0)\right\}.
\end{align*}
Then the probability that, conditional on $(\Lambda, \Gamma)$, the perfect pre-treatment match assumption holds decreases exponentially in  $T_0$ under any non-degenerate linear factor model:
\begin{align*}
    \text{$\mathbb{P}\left(\exists \omega\in\Delta^{K-2}: \mu_t^{1,\texttt{s}}(0)=\sum_{k=2}^K\omega_k\mu_t^{k,\texttt{s}}(0)\,\,\,\forall\,t\leq T_0\biggr|\Lambda,\Gamma\right)\leq\prod_{t=1}^{T_0}\mathbb{P}\left(E_t~\biggr|~\bigcap_{\Tilde{t}=1}^{t-1} E_{\Tilde{t}}, \Lambda,\Gamma\right)$}
\end{align*}
where the inequality follows from applying the probability chain rule to $\mathbb{P}\big(\bigcap_{t=1}^{T_0}E_t\big)$. Note that the upper bound decreases exponentially in $T_0$ as long as  $\mathbb{P}\big(E_t~\big|~\bigcap_{\Tilde{t}=1}^{t-1} E_{\Tilde{t}}, \Lambda,\Gamma\big)=1$ for at most finitely many $t\leq T_0$, which essentially requires the period-$t$ shocks, $\{\epsilon_{1t},...,\epsilon_{Kt}\}$, are not perfectly predictable by past shocks and latent factors infinitely often, a mild condition for any non-degenerate factor model. 

Instead of the observed sample outcomes, the later SC literature has also considered a similar match assumption on the latent factors only, as described in the next section.
\end{remark}

\subsubsection{Synthetic Controls with Match on Latent Factors}
\label{subsec:sc-latent-match}
Rather than assuming perfect pre-treatment match on the observed sample outcomes, numerous papers in the SC literature consider the existence of weights that match directly on the latent factors \citep[e.g.,][]{AAHIW21, F21, FP21, IV23, SBF25}. In this section, I focus on the synthetic difference-in-differences (SDID) method proposed by \citet*[AAHIW henceforth]{AAHIW21} for its close relevance to the current paper in terms of combining insights from both DID and SC. I show that a result analogous to Proposition \ref{prop:scid} holds for SDID.

\citetalias{AAHIW21} also impose the factor model \eqref{eqn:lfm}, but assume \textit{nonrandom} factors $\lambda_t'\gamma_k$ as in Proposition \ref{prop:scid}. They focus on a diverging number of both units and time periods: for a total of $N$ units, let $\mathcal{I}_1$ denote the set of $N_1$ indices associated with units that are treated after period $T_0$, and remain exposed to the treatment until period $T_{\texttt{f}}>T_0$, where the subscript “$\texttt{f}$” indicates that $T_\texttt{f}$ is the final number of periods observed, during which $\{1,...,T_0\}$ indexes the pre-treatment periods, and $\mathcal{T}_1\equiv\{T_0+1,...,T_\texttt{f}\}$ indexes the post-treatment periods with size $|\mathcal{T}_1|=T_1$. To accommodate this multiple-treated-unit, multiple-post-period framework, 
\citetalias{AAHIW21} extend the linear factor model \eqref{eqn:lfm} to incorporate nonstochastic heterogeneous treatment effects across both units and time periods so that the observed sample aggregate outcome follows
\begin{align}
    \mu_t^{j,\texttt{s}}=&\lambda_t'\gamma_j+\mathds{1}\{j\in \mathcal{I}_1, t\in\mathcal{T}_1\}\tau_{jt}+\epsilon_{jt},~~\text{for }j\in[N],\label{eqn:sdid-lfm}
\end{align}
i.e., as in the \citetalias{ADH10} model \eqref{eqn:lfm}, absent the treatment, the policymaker observes in the sample the structural term $\lambda_t'\gamma_j$ plus the noise $\epsilon_{jt}$; but for units $j$ in the treated group $\mathcal{I}_1$ after treatment exposure $t\in\mathcal{T}_1$, there is an additional treatment effect term $\tau_{jt}$. 

Let $\mathcal{I}_0\equiv [N]\setminus\mathcal{I}_1$ collect the set of $N_0$ control unit indices in ascending order. Re-index the control units in $\mathcal{I}_0$ by $\{2,...,K\}$, assigning index $k\in\{2,...,K\}$ to the $(k-1)$st smallest element of $\mathcal{I}_0$, so that index $2$ corresponds to the smallest element, index $3$ to the second smallest, and so on, with $K=N_0+1$ corresponding to the largest. \citetalias{AAHIW21} propose an asymptotic framework where either $T_1$ or $N_1$ can be non-diverging, but not both. To translate their potential outcome notation to the nomenclature of this paper, I follow their condensed-form notation \citepalias[Section VII.1,  Online Appendix]{AAHIW21} and group all treated units $j\in\mathcal{I}_1$ into a single treated group re-indexed by $k=1$, and collect all post-treatment periods $t\in\mathcal{T}_1$ into a single post-treatment block re-indexed by $t=T$. Specifically, let $\mu_{N_1,t}^1(0)\equiv\frac{1}{N_1}\sum_{j\in \mathcal{I}_1}\lambda_t'\gamma_j$ group the $N_1$ treated units in period $t$; before treatment $t\leq T_0$, 
\begin{align}
    \mu_t^k(0)&=\lambda_t'\gamma_k, ~~\text{ for $k\in\{2,...,K\}$},\notag\\
    \mu_t^1(0)&=\begin{cases}
    \mu_{N_1,t}^1(0), & \text{fixed $N_1$}\\
        \lim_{N_1\to\infty}\mu_{N_1,t}^1(0), & \text{diverging $N_1$}
    \end{cases}\label{eqn:sdid-trt-pre}
\end{align}
and after treatment exposure,
\begin{align}
    \mu_T^k(0)&=\begin{cases}
        \frac{1}{T_1}\sum_{t\in\mathcal{T}_1}\lambda_t'\gamma_k, &\text{fixed $T_1$}\\
        \lim_{T_1\to\infty}\frac{1}{T_1}\sum_{t\in\mathcal{T}_1}\lambda_t'\gamma_k, &\text{diverging $T_1$}
    \end{cases} ~~\text{ for $k\in\{2,...,K\}$},\notag\\
    \mu_T^1(0)&=\begin{cases}
    \lim_{T_1\to\infty}\frac{1}{T_1}\sum_{t\in\mathcal{T}_1}\mu_{N_1,t}^1(0), & \text{fixed $N_1$ but diverging $T_1$}\\
    \lim_{N_1\to\infty}\frac{1}{T_1}\sum_{t\in\mathcal{T}_1}\mu_{N_1,t}^1(0), & \text{fixed $T_1$ but diverging $N_1$}\\
    \lim_{T_1,N_1\to\infty}\frac{1}{T_1}\sum_{t\in\mathcal{T}_1}\mu_{N_1,t}^1(0), & \text{both $N_1$ and $T_1$ diverging}
    \end{cases}\label{eqn:sdid-trt-post}
\end{align}
For simplicity and without loss of generality, I focus on the asymptotic regime with a fixed post-treatment period ($T_1=1$ and $T_\texttt{f}$ coincides with $T$) and $N_1 \to \infty$; extending the results to the other asymptotic settings described above is straightforward but entails more cumbersome notation. Under this framework, the parameter of interest $\tau$ in \eqref{eqn:tau} is given by
\begin{align}
    \text{\small $\tau\equiv\mu_T^1(1)-\mu_T^1(0)=\lim_{N_1\to\infty}\frac{1}{N_1}\sum_{j\in\mathcal{I}_1}\tau_{jT}$, ~~~where $\mu_T^1(1)=\lim_{N_1\to\infty}\frac{1}{N_1}\sum_{j\in\mathcal{I}_1}(\lambda_T'\gamma_j+\tau_{jT})$}.\label{eqn:sdid-tau}
\end{align}
The limits in \eqref{eqn:sdid-trt-pre}-\eqref{eqn:sdid-tau} are assumed to exist and be finite. In addition to unit-specific weights, \citetalias{AAHIW21} also consider weighting across pre-treatment time periods and allow for a constant shift that can be differenced out. Let $\omega\equiv[\omega_2,...,\omega_K]'$ denote a set of convex unit weights and $\omega_0\in\mathbb{R}$ a constant. The following is a restatement of a subset of assumptions in Assumption 4 of \citetalias{AAHIW21} relevant for identification using unit weights:
\begin{asm}[SDID Identification]
\label{asm:SDID}
\textit{
Let $(\Tilde{\omega}_0,\Tilde{\omega})$ be the nonstochastic oracle weights defined in Section III-B of \citet[p.4104]{AAHIW21} that solves\footnote{In \eqref{eqn:sdid-obj}, $\zeta$ is a penalty parameter that depends on the number of pre-periods and covariance of the error term; see Eq. (19) of \citetalias{AAHIW21} for its exact definition.}
\begin{align}
    \min_{\omega_0\in\mathbb{R},\omega\in\Delta^{K-2}}\sum_{t=1}^{T_0}\left(\omega_0+\sum_{k=2}^K\omega_k\mu_t^k(0)-\mu_{N_1,t}^1(0)\right)^2+\zeta\|\omega\|_2^2.\label{eqn:sdid-obj}
\end{align}
As $T_0,K,N_1\to\infty$ and for $\nu\equiv[\nu_1,...,\nu_{T_0}]'$ a convex weight, the following holds:
    \begin{enumerate}[(i)]
        \item for $t\leq T_0$, $\big|\Tilde{\omega}_0+\sum_{k=2}^K\Tilde{\omega}_k\mu_t^k(0)-\mu_{N_1,t}^1(0)\big|\to0$;
        \item $\left(\mu_{N_1,T}^1(0)-\sum_{k=2}^K\Tilde{\omega}_k\mu_T^k(0)\right)-\sum_{t=1}^{T_0}{\nu}_t\left(\mu_{N_1,t}^1(0)-\sum_{k=2}^K\Tilde{\omega}_k\mu_t^k(0)\right)\to0$.
    \end{enumerate}
    }
\end{asm}

Assumption \ref{asm:SDID}(i) is a restatement of the equation immediately following Eq. (24) of \citetalias{AAHIW21}. It requires $(\Tilde{\omega}_0, \Tilde{\omega})$ to exactly minimize the first part of the objective function in \eqref{eqn:sdid-obj}, though in an asymptotic sense, to achieve perfect pre-treatment match on the latent factors up to a constant $\Tilde{\omega}_0$ difference as $T_0,K,N_1,\to\infty$; Assumption \ref{asm:SDID}(ii) is a restatement of Eq. (26) of \citetalias{AAHIW21} that ensures the oracle unit weights are also valid in the post-period $T$ to recover the treated unit's counterfactual factor $\mu_{N_1,T}^1(0)$ via a double-differencing form that cancels out the constant difference $\tilde{\omega}_0$. To see this, note that Assumption \ref{asm:SDID}(i) implies that the subtrahend in Assumption \ref{asm:SDID}(ii) goes to $\tilde\omega_0$, implying\footnote{For more algebraic details, see the proof of Proposition \ref{prop:scid} in Appendix \ref{appn:A}.}
\begin{align*}
   \mu_T^1(0)-\left(\Tilde{\omega}_0+\sum_{k=2}^K\Tilde{\omega}_k\mu_T^k(0)\right)\to0.
\end{align*}

The first part of the objective function in \eqref{eqn:sdid-obj} can be interpreted as an asymptotic version of SPT under convex weights: since the constant $\omega_0$ can be canceled out after first-order differencing, implying that the treated unit's trend is a convex combination of control units' trends in the pre-periods. However, such convex combination may not be unique, and the second part of the objective function in \eqref{eqn:sdid-obj} \textit{selects} the convex weight with the smallest $\ell^2$-norm. This selection may not be the correct one that recovers the counterfactual in the post-period $T$, but is nevertheless assumed to be correct under Assumption \ref{asm:SDID}(ii). Hence, Assumption \ref{asm:SDID} is a special case of SPT, which takes into account all convex weights that reproduce the treated unit's pre-trends, and a result analogous to Proposition \ref{prop:scid} holds:
\begin{prop}
\label{prop:sdid-id}
\textit{
    Let Assumption \ref{asm:SDID} hold. Then as $T_0, K, N_1\to\infty$,
    \begin{align*}
        \Apre\Tilde{\omega}-\bpre\to \boldsymbol{0} \quad\text{and}\quad \apost'\Tilde{\omega}\to\Delta\mu_T^1(0)=\lim_{N_1\to\infty}\frac{1}{N_1}\sum_{j\in \mathcal{I}_1}(\lambda_T-\lambda_{T_0})'\gamma_j.
    \end{align*}
     Under a strengthened version of Assumption \ref{asm:asymp-Apre} that accounts for $K, N_1\to\infty$, 
     \begin{align}
         \inf_{\delta\in\idM^+}\left|\Delta\mu_T^1(0)-\delta\right|\to0.\label{eqn:set-dist-sdid}
     \end{align}
}
\end{prop}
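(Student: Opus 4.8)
The plan is to follow the template of Proposition~\ref{prop:scid}, adapted to the oracle weights $(\Tilde{\omega}_0,\Tilde{\omega})$ and to the regime where $K$ also diverges. Throughout I would track the pre-treatment fit residual $r_t\equiv\Tilde{\omega}_0+\sum_{k=2}^K\Tilde{\omega}_k\mu_t^k(0)-\mu_{N_1,t}^1(0)$, for which Assumption~\ref{asm:SDID}(i) delivers $\max_{t\leq T_0}|r_t|\to0$. To prove $\Apre\Tilde{\omega}-\bpre\to\boldsymbol{0}$, I would write its $(t-1)$st coordinate as the first difference
\begin{align*}
\Big(\sum_{k=2}^K\Tilde{\omega}_k\mu_t^k(0)-\mu_t^1(0)\Big)-\Big(\sum_{k=2}^K\Tilde{\omega}_k\mu_{t-1}^k(0)-\mu_{t-1}^1(0)\Big),
\end{align*}
substitute the identity $\sum_{k=2}^K\Tilde{\omega}_k\mu_t^k(0)=\mu_{N_1,t}^1(0)-\Tilde{\omega}_0+r_t$, and use $\mu_{N_1,t}^1(0)\to\mu_t^1(0)$ in the $N_1\to\infty$ regime. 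The level $\Tilde{\omega}_0$ and the treated-unit term cancel under differencing, leaving $r_t-r_{t-1}+o(1)$, which vanishes coordinatewise by Assumption~\ref{asm:SDID}(i).

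Second, to prove $\apost'\Tilde{\omega}\to\Delta\mu_T^1(0)$ I would exploit Assumption~\ref{asm:SDID}(ii). Substituting $\mu_{N_1,t}^1(0)-\sum_{k=2}^K\Tilde{\omega}_k\mu_t^k(0)=\Tilde{\omega}_0-r_t$ into its subtrahend and using that $\nu$ is a convex weight gives
\begin{align*}
\sum_{t=1}^{T_0}\nu_t\Big(\mu_{N_1,t}^1(0)-\sum_{k=2}^K\Tilde{\omega}_k\mu_t^k(0)\Big)=\Tilde{\omega}_0-\sum_{t=1}^{T_0}\nu_t r_t\to\Tilde{\omega}_0,
\end{align*}
since $\big|\sum_t\nu_t r_t\big|\leq\max_{t\leq T_0}|r_t|\to0$. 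Feeding this into Assumption~\ref{asm:SDID}(ii) and sending $\mu_{N_1,T}^1(0)\to\mu_T^1(0)$ yields $\mu_T^1(0)-\big(\Tilde{\omega}_0+\sum_{k=2}^K\Tilde{\omega}_k\mu_T^k(0)\big)\to0$. Combining with Assumption~\ref{asm:SDID}(i) at $t=T_0$, namely $\sum_{k=2}^K\Tilde{\omega}_k\mu_{T_0}^k(0)\to\mu_{T_0}^1(0)-\Tilde{\omega}_0$, and subtracting the two displays, the constant $\Tilde{\omega}_0$ again cancels, giving $\apost'\Tilde{\omega}=\sum_{k=2}^K\Tilde{\omega}_k\big(\mu_T^k(0)-\mu_{T_0}^k(0)\big)\to\mu_T^1(0)-\mu_{T_0}^1(0)=\Delta\mu_T^1(0)$. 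Because \citetalias{AAHIW21} treat the factors as nonrandom, all limits here are for deterministic quantities, so no averaging over factors is needed, in contrast to Proposition~\ref{prop:scid}.

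For the set-distance claim \eqref{eqn:set-dist-sdid}, note that $\Tilde{\omega}\in\Delta^{K-2}$ satisfies $\Tilde{\omega}\geq0$ and $\boldsymbol{1}'\Tilde{\omega}=1$ exactly, so only the equality constraint is slack, with residual $\rho\equiv\Apre\Tilde{\omega}-\bpre$ vanishing coordinatewise. I would bound
\begin{align*}
\inf_{\delta\in\idM^+}\big|\Delta\mu_T^1(0)-\delta\big|\leq\big|\Delta\mu_T^1(0)-\apost'\Tilde{\omega}\big|+\big|\apost'(\Tilde{\omega}-\omega^\ast)\big|
\end{align*}
for a point $\omega^\ast$ lying exactly in the feasible set of $\idM^+$. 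The first term vanishes by the second step. For the second term, the strengthened version of Assumption~\ref{asm:asymp-Apre} guarantees this feasible set is eventually nonempty and that the smallest singular value of $\Apre$ (augmented by the add-up row $\boldsymbol{1}'$) is bounded below; a polyhedral error bound (Hoffman's lemma) then yields a feasible $\omega^\ast$ with $\omega^\ast\geq0$ and $\|\Tilde{\omega}-\omega^\ast\|\lesssim\|\rho\|$, so that $\big|\apost'(\Tilde{\omega}-\omega^\ast)\big|\leq\|\apost\|\,\|\Tilde{\omega}-\omega^\ast\|\lesssim\|\apost\|\,\|\rho\|$. The nonnegativity of $\omega^\ast$ is handled automatically by the error bound, since $\Tilde{\omega}$ is already in the simplex.

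The hard part is making this last product vanish in the joint limit $T_0,K,N_1\to\infty$. With $K$ diverging, $\|\apost\|$ can grow like $\sqrt{K}$, while $\|\rho\|_2$ aggregates $T_0-1$ individually-vanishing residuals and the Hoffman constant depends on the conditioning of $\Apre$; the crude product need not go to zero. The strengthened form of Assumption~\ref{asm:asymp-Apre} must therefore control these magnitudes jointly---or, equivalently, guarantee directly that some feasible convex correction closes the value gap $\big|\apost'(\Tilde{\omega}-\omega^\ast)\big|$. The clean benchmark is exact match on the latent factors, under which $\rho=\boldsymbol{0}$ and $\Tilde{\omega}$ already lies in $\idM^+$; the argument above shows the asymptotic slack is harmless once the strengthened condition tames the diverging dimensions.
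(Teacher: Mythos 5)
Your proof follows essentially the same route as the paper's: first-order differencing to cancel $\Tilde{\omega}_0$ yields $\Apre\Tilde{\omega}-\bpre\to\boldsymbol{0}$ from Assumption \ref{asm:SDID}(i); feeding the convex time weights $\nu$ and the pre-period balance into Assumption \ref{asm:SDID}(ii) and differencing again gives $\apost'\Tilde{\omega}\to\Delta\mu_T^1(0)$; and the set-distance claim \eqref{eqn:set-dist-sdid} uses the same triangle-inequality, Cauchy--Schwarz, and Hoffman-error-bound argument that the paper imports verbatim from the proof of Proposition \ref{prop:scid}. Your closing observation---that in the joint limit $\|\apost\|$ may grow like $\sqrt{K}$ and the Hoffman constant and aggregated residual norm must be controlled together---is precisely the content the paper delegates to the unspecified ``strengthened version'' of Assumption \ref{asm:asymp-Apre}, so it is a faithful reading of the result rather than a deviation or a gap.
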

\begin{remark}[Probability Limit of the SDID Estimator] Let $\widehat{\tau}^{\texttt{SDID}}$ denote the SDID estimator for $\tau$ defined in Algorithm 1 of \citetalias{AAHIW21} (p. 4093). Then under additional assumptions stated in their Theorem 1, \citetalias{AAHIW21} show that $\widehat{\tau}^{\texttt{SDID}}$ is asymptotically normal and centered at the true $\tau$. An immediate implication is consistency: $\widehat{\tau}^{\texttt{SDID}}-\tau=o_p(1)$ for the expression of $\tau$ given in \eqref{eqn:sdid-tau} under the factor model. Then Proposition \ref{prop:scid} implies that the distance between the probability limit of 
$\widehat{\tau}^{\texttt{SDID}}$ and the identified set for the \textit{treatment effect} under SPT with convex weights, formally defined in \eqref{eqn:idE+}, converges to $0$. 
\end{remark}

In conclusion of Section \ref{sec:id}, the common ground of both DID under parallel trends and SC-based methods is the intuition that, absent treatment, the treated unit can be expressed as a convex combination of a group of control units. Assumption \ref{asm:spt} is an identifying assumption that synthesizes this idea of weighting at the population expectation level and nests widely-used empirical strategies under the conditions stated in this section.

\section{Inference on the Identified Set for the Treatment Effect}
\label{sec:inf}
So far the focus has been on identifying the counterfactual trend $\Delta\mu_T^1(0)\equiv\mu_T^1(0)-\mu_{T_0}^1(0)$, which then identifies the counterfactual outcome $\mu_T^1(0)$ and consequently the treatment effect $\tau\equiv\mu_T^1(1)-\mu_T^1(0)$. However, for inference, the sampling uncertainty in both $\mu_T^1(1)$ and $\mu_T^1(0)$ should be taken into account and thus inference for $\tau$ directly is more useful than inference for the counterfactual alone. Under Assumption \ref{asm:spt} with convex weights, the identified set for $\tau$ is given by
\begin{align}
    \idE^+\equiv\left\{\mu_T^1(1)-\mu:\mu-\mu_{T_0}^1(0)\in\idM^+\right\},\label{eqn:idE+}
\end{align}
i.e., $\idE^+$ is the set of values that takes the difference between the observed period-$T$ mean outcome $\mu_T^1(1)$ and a candidate value $\mu$ for the counterfactual $\mu_T^1(0)$, for the set of $\mu$ that is consistent with a counterfactual trend value $\mu-\mu_{T_0}^1(0)$ in the identified set $\idM^+$ under convex weights defined in \eqref{eqn:id-set+}.

In this section, I propose a valid confidence set that covers every element of $\idE^+$ with a pre-specified $(1-\alpha)$  probability by test inversion, given estimators of $(\Apre, \bpre,\apost)$. The key insight that motivates the proposed inference procedure is that $\idE^+$ can be characterized by moment equalities: $\Tilde{\tau}\in\idE^+$ if and only if there exists $\omega\in\Delta^{K-2}$ such that
\begin{align}
\begin{cases}
    \Apre\omega-\bpre=0\\
    \apost'\omega-\left(\mu_T^1(1)-\Tilde{\tau}-\mu_{T_o}^1(0)\right)=0
\end{cases}\label{eqn:moment-eq}
\end{align}
where the second equality in \eqref{eqn:moment-eq} follows from the identity $\mu_T^1(0)=\mu_T^1(1)-\tau$. Let
\begin{align}
    A\equiv\begin{bmatrix}
        \Apre\\
        \apost'
    \end{bmatrix}, ~~b\equiv\begin{bmatrix}
        \bpre\\
        \mu_T^1(1)-\mu_{T_0}^1(0)
    \end{bmatrix}\label{eqn:Ab-defn}
\end{align}
stack observed trends that need to be estimated. Let $vec(A)\equiv[A_{\cdot,1}',\dots,A_{\cdot,K-1}']'$ stack columns of $A$, $\boldsymbol{I}_{T_0}$ denote the $(T_0\times T_0)$ identity matrix, $\otimes$ be the Kronecker product, and $\boldsymbol{e}_{T_0}$ denote the $T_0$-th standard basis vector in $\mathbb{R}^{T_0}$. Rewrite the moment equalities \eqref{eqn:moment-eq} by
\begin{align}
m_{\tilde{\tau}}(\omega;A, b)\equiv\underbrace{[-\boldsymbol{I}_{T_0}, ~~\omega'\otimes\boldsymbol{I}_{T_0}]}_{\equiv J(\omega)}\begin{bmatrix}
    b\\
    vec(A)
\end{bmatrix}+\Tilde{\tau}\boldsymbol{e}_{T_0}=0,\label{eqn:moment-eq-compact}
\end{align}
where for a fixed $\tilde{\tau}\in\idE^+$, the moment function $m_{\tilde{\tau}}(\omega;A, b)$ is linear in both the nuisance parameter $\omega$ and $(A,b)$ to be estimated. Then the identified set for the treatment effect $\idE^+$ in \eqref{eqn:idE+} can be equivalently written as
\begin{align}
    \idE^+=&\left\{\tilde{\tau}\in\mathbb{R}:\,\exists\,\omega\in\Delta^{K-2} \text{ such that } m_{\tilde{\tau}}(\omega;A,b)=0\right\}\notag\\
    =&\left\{\tilde{\tau}\in\mathbb{R}: \min_{\omega\in\Delta^{K-2}}m_{\tilde{\tau}}(\omega;A,b)'m_{\tilde{\tau}}(\omega;A,b)=0\right\}\label{eqn:idE+criterion}
\end{align}
where \eqref{eqn:idE+criterion} translates the constraints of the original linear programs \eqref{eqn:LP+} that need to be estimated into an objective function quadratic in the choice variable $\omega$, which is then profiled out via $\min_{\omega\in\Delta^{K-2}}(\,\cdot\,)$ over a known simplex constraint $\Delta^{K-2}$. This transformation sidesteps the need to impose rank conditions on the constraint coefficient matrix $\Apre$ in the original linear programs \eqref{eqn:LP+} \citep[see, e.g.,][for examples of rank conditions on coefficient matrices]{FH15, G25, GM25, CSS25}. Allowing $\Apre$ to have arbitrary rank is important in the context of this paper, as the existing methods considered here may imply an $\Apre$ that is highly rank-deficient; in particular, the $\Apre$ matrix implied under the TWFE model in Example \ref{eg:TWFE-stability} has rank $\leq1$.

I assume the availability of either panel data as in Example \ref{eg:panel} or repeated cross-sections (RCS) as in Example \ref{eg:rcs}.

\begin{asm}[Sampling] 
\label{asm:sample}
\textit{ Assume a sample of size $n$ is available, where either
    \begin{enumerate}[(i)]
        \item (Panel) $\{Y_{i1},...,Y_{iT},G_i\}_{i=1}^n$ is independently and identically distributed (iid) drawn from $\mathbb{P}$, and $p_k\equiv \mathbb{P}(G_i=k) \geq \epsilon>0$ for constant $\epsilon$ and all $k\in\{1,...,K\}$ ; or
        \item (RCS) $\{Y_i,G_i, T_i\}_{i=1}^n$ for $(Y_i,G_i)$ iid sampled conditional on $T_i$ such that
        $\mathbb{P}(Y_i\leq y, G_i=k,T_i=t)=\pi_t\mathbb{P}_t(Y_i\leq y, G_i=k)$, where
        \begin{itemize}
            \item $\mathbb{P}_t$ is the conditional distribution of $(Y_i,G_i)$ given $T_i=t$ if $T_i$ is iid multinomial, in which case both $\pi_t=\mathbb{P}(T_i=t)$ and $\pi_{kt}\equiv\mathbb{P}(G_i=k,T_i=t)\geq\epsilon>0$; or
            \item $\mathbb{P}_t$ is the  distribution of $(Y_i,G_i)$ within stratum $T_i=t$ if $T_i$ is fixed, in which case both $\pi_t=\lim_{n\to\infty}\frac{\#(T_i=t)}{n}$ and $\pi_{kt}\equiv\pi_t\mathbb{P}_t(G_i=k)\geq \epsilon > 0$.
        \end{itemize}
    \end{enumerate}
}
\end{asm}

Assumption \ref{asm:sample} allows for both panel data and repeated cross-sections. In the latter case, it accommodates both random-$T_i$ sampling and fixed $T_i$-stratum sampling, as discussed in \citet{SZ20}, \citet{SX25}, and the references therein. As in \citet{SX25}, Assumption \ref{asm:sample} does not impose restrictions on compositional changes that require the distribution of $(Y_i,G_i)|T_i$ to be the same across all $T_i$. 
A natural estimator for the observed population mean outcome $\mu_t^k\equiv\mu_t^k(0)+\left(\mu_t^k(1)-\mu_t^k(0)\right)\mathds{1}\{k=1,t=T\}$ is the sample mean within the $(kt)$-cell:
\begin{align*}
    \widehat{\mu}_t^k\equiv\begin{cases}
        \frac{1}{n}\sum_{i=1}^n\frac{1}{\widehat{p}_k}\mathds{1}\{G_i=k\}Y_{it} \hspace{1.45cm}\text{ for } ~\widehat{p}_k\equiv\frac{1}{n}\sum_{i=1}^n\mathds{1}\{G_i=k\}, & \text{if panel;}\\
        \frac{1}{n}\sum_{i=1}^n\frac{1}{\widehat{\pi}_{kt}}\mathds{1}\{G_i=k,T_i=t\}Y_i ~~\text{ for } ~\widehat{\pi}_{kt}\equiv\frac{1}{n}\sum_{i=1}^n\mathds{1}\{G_i=k,T_i=t\}, & \text{if RCS}.
    \end{cases}
\end{align*}
These sample means can be plugged into $(A,b)$ in \eqref{eqn:Ab-defn}, where each observed trend $\Delta\mu_t^k\equiv\mu_t^k-\mu_{t-1}^k$ is estimated by $\widehat{\mu}_t^k-\widehat{\mu}_{t-1}^k$. Let $\big(\widehat{A}_n, \widehat{b}_n\big)$ denote the resulting estimator and
\begin{align}
    &\widehat{m}_{\tilde{\tau}}^2(\omega)\equiv m_{\tilde{\tau}}\left(\omega;\widehat{A}_n, \widehat{b}_n\right)'m_{\tilde{\tau}}\left(\omega;\widehat{A}_n, \widehat{b}_n\right),~~{m}_{\tilde{\tau}}^2(\omega)\equiv m_{\tilde{\tau}}\left(\omega;A,b\right)'m_{\tilde{\tau}}\left(\omega;A,b\right),\label{eqn:m-sq}\\
    &\hspace{2cm}\widehat{Q}_n(\tilde{\tau})~\equiv\min_{\omega\in\Delta^{K-2}}\widehat{m}_{\tilde{\tau}}^2(\omega), \quad Q(\tilde{\tau})~\equiv\min_{\omega\in\Delta^{K-2}}m_{\tilde{\tau}}^2(\omega),~~\label{eqn:Q-hat}
\end{align}
where for any $\Tilde{\tau}\in\mathbb{R}$, $\widehat{Q}_n(\tilde{\tau})$ is the sample criterion function with plugged-in $\big(\widehat{A}_n, \widehat{b}_n\big)$, whose population counterpart is $Q(\tilde{\tau})$. 

Note that $Q(\tilde{\tau})$ is a Hadamard directionally differentiable mapping of the moment function $m_{\tilde{\tau}}(\omega;A,b)$ that first squares it and then takes its minimum over $\omega\in\Delta^{K-2}$. If $$\sqrt{n}\left(\widehat{m}_{\tilde{\tau}}^2(\omega)-m_{\tilde{\tau}}^2(\omega)\right)$$ converges weakly to a Gaussian process indexed by functions of $\omega$---as it does under the assumptions in Theorem \ref{thm:asymp} below---then by the Delta method for directionally differentiable functions, the limit distribution of 
\begin{align}
    \sqrt{n}\left(\widehat{Q}_n(\tilde{\tau})-Q(\tilde{\tau})\right)\label{eqn:target-dist}
\end{align} is given by the Hadamard directional derivative of $Q(\tilde{\tau})$ applied to the limit Gaussian process. The resulting limit, denoted by $\psi(\Tilde{\tau})$, has a complex expression deferred to \eqref{eqn:psi} in Appendix \ref{appn:A}. Under the null hypothesis that $\tilde{\tau} \in \idE^+$, $Q(\tilde{\tau})=0$, and a $(1-\alpha)$-level confidence set for the elements in $\idE^+$ can then be constructed by test inversion: for $c_{\beta}(\tilde{\tau})$ denoting the $\beta$-quantile of $\psi(\Tilde{\tau})$,
\begin{align}
    \mathcal{CS}_n^{(1-\alpha)}\equiv\left\{\tilde{\tau}:\sqrt{n}\widehat{Q}_n(\tilde{\tau})\leq c_{(1-\alpha+\varsigma)}(\tilde{\tau})+\varsigma\right\},\label{eqn:CS}
\end{align}
where, following \citet{AS13}, $\varsigma>0$ is an infinitesimal uniformity factor to account for potential discontinuities in the distribution of $\psi(\tilde{\tau})$ at $c_{(1-\alpha)}(\tilde{\tau})$.
The next result establishes point-wise validity of $\mathcal{CS}_n^{(1-\alpha)}$ for all $\tilde{\tau}\in\idE^+$, after which I present a consistent estimator for the critical value via bootstrap.
\begin{theorem}
\label{thm:asymp}
    \textit{Let Assumption \ref{asm:spt} hold with non-negative weights and assume, under the sampling scheme in Assumption \ref{asm:sample}, there is a constant $0<C<\infty$ such that $0<\mathbb{E}\big[Y_{it}^2\big]<C$ in the panel data case or $0<\mathbb{E}\big[Y_i^2\big]<C$ in the RCS case. Then for any $\alpha\in(0,1)$,
    \begin{align*}
        \lim_{n\to\infty}\mathbb{P}\left(\tilde{\tau}\in\mathcal{CS}_n^{(1-\alpha)}\right)\geq1-\alpha ~\text{ for all $\tilde{\tau}\in\idE^+$.}
    \end{align*}
    }
\end{theorem}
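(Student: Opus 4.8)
The plan is to convert the moment-equality characterization $\idE^+=\{\tilde\tau:Q(\tilde\tau)=0\}$ from \eqref{eqn:idE+criterion} into a testing problem, derive the large-sample law of the statistic $\sqrt n\,\widehat Q_n(\tilde\tau)$ by a functional delta method in the spirit of \citet{FS19}, and conclude by test inversion. Throughout, fix $\tilde\tau\in\idE^+$, so $Q(\tilde\tau)=0$ and the argmin set $\Omega^*_{\tilde\tau}\equiv\arg\min_{\omega\in\Delta^{K-2}}m_{\tilde\tau}^2(\omega)=\{\omega\in\Delta^{K-2}:m_{\tilde\tau}(\omega;A,b)=0\}$ is nonempty. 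The four steps are: (i) a CLT for the estimated coefficients $(\widehat A_n,\widehat b_n)$; (ii) weak convergence of the squared-moment process $\omega\mapsto\widehat m_{\tilde\tau}^2(\omega)$; (iii) the delta method through the $\min$-over-simplex functional to obtain the limit $\psi(\tilde\tau)$ of $\sqrt n\,\widehat Q_n(\tilde\tau)$; and (iv) the coverage inequality driven by the uniformity factor $\varsigma$.

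First I would establish that, under the sampling scheme of Assumption \ref{asm:sample} and the second-moment bounds $\mathbb{E}[Y_{it}^2]<C$ (panel) or $\mathbb{E}[Y_i^2]<C$ (RCS), each cell mean $\widehat\mu_t^k$ is a ratio of sample averages of independent summands whose denominators $\widehat p_k$ (resp.\ $\widehat\pi_{kt}$) converge to constants bounded away from $0$ by $\epsilon$. A CLT for numerators and denominators jointly, combined with the delta method for the smooth ratio map, yields joint asymptotic normality $\sqrt n\big(\mathrm{vec}(\widehat A_n)-\mathrm{vec}(A),\,\widehat b_n-b\big)\xrightarrow{d}Z$ for a mean-zero Gaussian $Z=(Z_A,Z_b)$ with finite covariance, since each estimated trend is a difference of two such cell means.

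Next, because $m_{\tilde\tau}(\omega;A,b)=A\omega-b+\tilde\tau\,\boldsymbol{e}_{T_0}$ is affine in $(A,b)$ and polynomial (hence continuous) in $\omega$ over the compact simplex $\Delta^{K-2}$, the expansion $\sqrt n\big(\widehat m_{\tilde\tau}^2(\omega)-m_{\tilde\tau}^2(\omega)\big)=2\,m_{\tilde\tau}(\omega;A,b)'(Z_A\omega-Z_b)+o_p(1)$ holds uniformly in $\omega$, with the quadratic remainder of order $n^{-1/2}$ vanishing uniformly. By the continuous mapping theorem this gives weak convergence of $\sqrt n(\widehat m_{\tilde\tau}^2-m_{\tilde\tau}^2)$ in $C(\Delta^{K-2})$ to the Gaussian process $\mathbb{G}_{\tilde\tau}(\omega)\equiv 2\,m_{\tilde\tau}(\omega;A,b)'(Z_A\omega-Z_b)$. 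I would then invoke Hadamard directional differentiability of the functional $\phi(f)=\min_{\omega\in\Delta^{K-2}}f(\omega)$, whose derivative at $f=m_{\tilde\tau}^2$ is $h\mapsto\min_{\omega\in\Omega^*_{\tilde\tau}}h(\omega)$, and apply the delta method of \citet{FS19} to conclude $\sqrt n\big(\widehat Q_n(\tilde\tau)-Q(\tilde\tau)\big)\xrightarrow{d}\psi(\tilde\tau)=\min_{\omega\in\Omega^*_{\tilde\tau}}\mathbb{G}_{\tilde\tau}(\omega)$. Since $Q(\tilde\tau)=0$ on $\idE^+$, this is also the limit of the raw statistic $\sqrt n\,\widehat Q_n(\tilde\tau)$.

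The coverage conclusion then follows by test inversion: with $c_{(1-\alpha+\varsigma)}(\tilde\tau)$ the $(1-\alpha+\varsigma)$-quantile of $\psi(\tilde\tau)$, the portmanteau lemma applied to the open set $(-\infty,\,c_{(1-\alpha+\varsigma)}(\tilde\tau)+\varsigma)$ gives
\[
\liminf_{n}\mathbb{P}\big(\sqrt n\,\widehat Q_n(\tilde\tau)\le c_{(1-\alpha+\varsigma)}(\tilde\tau)+\varsigma\big)\ge\mathbb{P}\big(\psi(\tilde\tau)<c_{(1-\alpha+\varsigma)}(\tilde\tau)+\varsigma\big)\ge\mathbb{P}\big(\psi(\tilde\tau)\le c_{(1-\alpha+\varsigma)}(\tilde\tau)\big)\ge1-\alpha,
\]
which is exactly $\liminf_n\mathbb{P}(\tilde\tau\in\mathcal{CS}_n^{(1-\alpha)})\ge1-\alpha$. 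I expect the main obstacle to be the degeneracy that is intrinsic to this null: because $m_{\tilde\tau}(\omega;A,b)=0$ for every $\omega\in\Omega^*_{\tilde\tau}$, the leading Gaussian term $\mathbb{G}_{\tilde\tau}$ \emph{vanishes} on the argmin set, so the directional derivative $\psi(\tilde\tau)$ can be degenerate and its limiting CDF discontinuous at the relevant quantile (indeed $\sqrt n\,\widehat Q_n(\tilde\tau)$ is then $O_p(n^{-1/2})$). This is precisely why a strictly positive uniformity slack $\varsigma$—entering both the quantile level and the threshold, following \citet{AS13}—is required to keep the portmanteau inequality intact, and why the argument needs neither a non-degenerate limit nor any rank condition on $\Apre$, relying only on the asymptotic normality established in step (i).
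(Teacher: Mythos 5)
Your proof is correct, and its skeleton---the moment-equality characterization of $\idE^+$, asymptotic normality of the estimated cell means, weak convergence of the squared-moment process, the delta method of \citet{FS19} through the Hadamard directionally differentiable functional $\min_{\omega\in\Delta^{K-2}}(\,\cdot\,)$, and test inversion with the \citet{AS13} slack $\varsigma$---is the same as the paper's. Two of your technical choices genuinely differ, both in the direction of being more elementary or cleaner. First, for process convergence the paper verifies that the class $\big\{\big(J(\omega)\mathfrak{L}\mu+\tilde\tau\boldsymbol{e}_{T_0}\big)'J(\omega)\mathfrak{L}\tilde{Y}_i:\omega\in\Delta^{K-2}\big\}$ is VC-subgraph with a square-integrable envelope and invokes a Donsker theorem to obtain convergence in $\ell^\infty(\Delta^{K-2})$; you instead note that $\sqrt{n}\big(\widehat{m}^2_{\tilde\tau}-m^2_{\tilde\tau}\big)$ is, up to a quadratic remainder that is uniformly $O_p(n^{-1/2})$ over the simplex, an affine image of the finite-dimensional vector $\sqrt{n}\big(\mathrm{vec}(\widehat{A}_n)-\mathrm{vec}(A),\,\widehat{b}_n-b\big)$, so a finite-dimensional CLT plus the continuous mapping theorem into $C(\Delta^{K-2})$ suffices; this works precisely because the index set is a compact subset of $\mathbb{R}^{K-1}$ and the process is linear in finitely many Gaussian coordinates, making empirical-process machinery dispensable. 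Second, for the coverage step the paper splits into cases according to whether $c_{(1-\alpha+\varsigma)}(\tilde\tau)+\varsigma$ is a continuity point of the law of $\psi(\tilde\tau)$; your portmanteau argument on the open set $\big(-\infty,\,c_{(1-\alpha+\varsigma)}(\tilde\tau)+\varsigma\big)$ absorbs both cases in a single chain of inequalities and makes explicit where each of the two appearances of $\varsigma$ is used. Finally, your closing observation is correct and not made explicit in the paper: under the null, the limit $\psi(\tilde\tau)$ is not merely possibly degenerate but is exactly the point mass at zero, since the limiting Gaussian process $2\,m_{\tilde\tau}(\omega;A,b)'(Z_A\omega-Z_b)$ vanishes identically on the argmin set $\Omega^*_{\tilde\tau}$, whence $\sqrt{n}\,\widehat{Q}_n(\tilde\tau)=o_p(1)$. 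This explains both why the $\varsigma$ slack is indispensable (every quantile of $\psi(\tilde\tau)$ sits at the discontinuity point $0$, so the paper's ``continuity point'' case can never arise without the slack) and why the procedure is conservative under the null, consistent with the near-$100\%$ coverage of SPT reported in Table \ref{tbl:sim}.
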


\begin{remark}[Inequality Constraints]
    Though the inference problem of this paper only involves  equality constraints that need to be estimated, the proposed method can also be applied to linear programs with unknown \textit{inequality} constraints, in which case the moment condition in \eqref{eqn:moment-eq-compact} changes from equality to inequality:
    \begin{align*}
        m_{\tilde\tau}(\omega;A,b)\leq0.
    \end{align*}
    Instead of squaring $ m_{\tilde\tau}$, violations of inequality constraints can be measured by taking the positive part of $m_{\tilde\tau}$ using the criterion function
    \begin{align}
        \min_{\omega\in\Delta^{K-2}}[m_{\tilde\tau}(\omega;A,b)]_+,\label{eqn:new-Q}
    \end{align}
    where $[\,\cdot\,]_+\equiv\max\{\,\cdot\,, 0\}$ returns the largest positive entry of the input vector, which is again a Hadamard directionally differentiable function. Since compositions preserve directional differentiability \citep[Proposition 3.6]{S90}, the criterion function \eqref{eqn:new-Q} is also directionally differentiable and the same argument used in the proof of Theorem \ref{thm:asymp} applies.
\end{remark}

\begin{remark}[Testing Feasibility]\label{rem:test-feasible}
    Theorem \ref{thm:asymp} can be adapted to construct a test for feasibility of solutions. For example, testing whether the linear programs with nonnegativity constraints in \eqref{eqn:LP+} have a feasible solution amounts to testing
    \begin{align*}
        \exists~\omega\in\Delta^{K-2}: \Apre\omega-\bpre=0.
    \end{align*}
    A profiled criterion similar to $Q(\tilde{\tau})$ in \eqref{eqn:Q-hat} can be constructed by
    \begin{align}
        Q\equiv\min_{\omega\in\Delta^{K-2}}(\Apre\omega-\bpre)'(\Apre\omega-\bpre).\label{eqn:Q-feasibility}
    \end{align}
    One can then plug estimators for $(\Apre,\bpre)$---which are submatrix of $A$ and subvector of $b$ defined in \eqref{eqn:Ab-defn}---into $Q$ and form the sample criterion $\widehat{Q}$. The limit distribution of $\sqrt{n}(\widehat{Q}-Q)$ can be derived following a similar argument as in the proof of Theorem \ref{thm:asymp}, and feasibility of \eqref{eqn:LP+} can be rejected at a given level if the test statistic $\sqrt{n}\widehat{Q}$ exceeds the corresponding quantile of this limit distribution.

    The same idea applies to testing feasibility of the linear programs \textit{without} nonnegativity constraints in \eqref{eqn:LP} and the existence of $\varphi\in\mathbb{R}^{T_0}$ such that $\apost=[\boldsymbol{1}~~\Apre']\varphi$, which is implied by the low-rank property in Proposition \ref{prop:idM-singleton}. In these cases, however, the constraint set over which $Q$ minimizes is no longer compact and convergence as a process indexed by $\omega$ may fail. Nevertheless, under the null hypothesis that a feasible solution exists, a minimum-norm solution also exists, allowing restriction to a compact subset of feasible solutions as a proof device. Testing the feasibility of linear programs with estimated coefficients is a natural extension of the results in this paper and may be of independent interest, but a detailed treatment is left for future work.
\end{remark}

\subsection{Bootstrap Critical Values}
Due to the lack of full differentiability, standard bootstrap is inconsistent for the limiting distribution of \eqref{eqn:target-dist}; see Section 3.2 of \citet{FS19}. Procedure \ref{proc:bootstrap} below details a modified bootstrap procedure based on \citet{FS19} that uses numerical approximation to estimate the directional derivative in \eqref{eqn:phi}.

\begin{proc}[Bootstrap and Numerical Approximation of Directional Derivatives]
\label{proc:bootstrap}
\textit{
    \begin{enumerate}
        \item Draw $\{W_i\}_{i=1}^n$ iid from the exponential distribution with mean $1$ independent of the sample data $\mathcal{S}_n$ under Assumption \ref{asm:sample} and construct the bootstrap analog of $(\widehat{A}_n,\widehat{b}_n)$, denoted by $(\breve{A}_n,\breve{b}_n)$, where the bootstrap analog of the $(kt)$-cell sample mean is
{\small{ 
\begin{align*}
\breve{\mu}_t^k\equiv\begin{cases}
        \frac{1}{n}\sum_{i=1}^n\frac{1}{\breve{p}_k}\mathds{1}\{G_i=k\}W_iY_{it} \hspace{1.45cm}\text{ for } ~\breve{p}_k\equiv\frac{1}{n}\sum_{i=1}^nW_i\mathds{1}\{G_i=k\}, & \text{if panel;}\\
        \frac{1}{n}\sum_{i=1}^n\frac{1}{\breve{\pi}_{kt}}\mathds{1}\{G_i=k,T_i=t\}W_iY_i ~~\text{ for } ~\breve{\pi}_{kt}\equiv\frac{1}{n}\sum_{i=1}^nW_i\mathds{1}\{G_i=k,T_i=t\}, & \text{if RCS},
    \end{cases}
\end{align*}}}Note that for panel data, the same multiplier $W_i$ is used for all time series of the individual $i$, $(Y_{i1},...,Y_{iT})$. Let $\breve{m}^2_{\tilde{\tau}}(\omega)\equiv m_{\tilde{\tau}}\big(\omega; \breve{A}_n, \breve{b}_n)'m_{\tilde{\tau}}\big(\omega; \breve{A}_n, \breve{b}_n\big)$.
\item  Numerically approximate the directional derivative of $\min_{\omega\in\Delta}(\,\cdot\,)$, denoted by $\phi_{m^2_{\tilde\tau}}$ and defined in \eqref{eqn:phi},  in direction $g_n(\omega)=\sqrt{n}\big(\breve{m}^2_{\tilde{\tau}}(\omega)-\widehat{m}^2_{\tilde{\tau}}(\omega)\big)$ by
\begin{align}
    \widehat{\phi}_{m^2_{\tilde\tau}}\big(g_n(\omega)\big)\equiv\frac{1}{s_n}\left\{\min_{\omega\in\Delta}\left(\widehat{m}^2_{\tilde{\tau}}(\omega)+s_ng_n(\omega)\right)-\min_{\omega\in\Delta}\widehat{m}^2_{\tilde{\tau}}(\omega)\right\},\label{eqn:num-approx}
\end{align}
where the step size $s_n\to0$ and $s_n\sqrt{n}\to\infty$.
\item Obtain the bootstrap critical value
    \begin{align}  \widehat{c}_\beta(\tilde{\tau})\equiv\inf\bigg\{c:\mathbb{P}\bigg(\left.\widehat{\phi}_{m^2_{\tilde\tau}}\big(g_n(\omega)\big)\leq c \,\,\right|\, \mathcal{S}_n\bigg)\geq \beta\bigg\},\label{eqn:bootstrap-cv}
    \end{align}
    as an estimator for $c_\beta(\tilde{\tau})$, the $\beta$-quantile of the limit distribution of \eqref{eqn:target-dist}.
    \end{enumerate}
}
\end{proc}

The following result establishes consistency of the bootstrap critical value \eqref{eqn:bootstrap-cv}.

\begin{prop}\label{prop:boostrap-cv}
    \textit{Let the assumptions in Theorem \ref{thm:asymp} hold. For $\tilde{\tau}\in\idE^+$, if the limit distribution of \eqref{eqn:target-dist} is continuous and increasing at  $c_\beta(\tilde{\tau})$, then $\widehat{c}_\beta(\tilde{\tau})\xrightarrow[]{p}c_\beta(\tilde{\tau}).$}
\end{prop}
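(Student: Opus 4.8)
The plan is to recognize Proposition~\ref{prop:boostrap-cv} as an application of the bootstrap theory for Hadamard directionally differentiable functionals developed in \citet{FS19}, and to verify its hypotheses in the present setting. Regard the squared moment as a random element of $\ell^\infty(\Delta^{K-2})$ by setting $\theta_0\equiv m_{\tilde\tau}^2(\cdot)$ and $\widehat\theta_n\equiv\widehat m_{\tilde\tau}^2(\cdot)$ as functions of $\omega$, and let $\phi(f)\equiv\min_{\omega\in\Delta^{K-2}}f(\omega)$, so that $Q(\tilde\tau)=\phi(\theta_0)$ and $\widehat Q_n(\tilde\tau)=\phi(\widehat\theta_n)$. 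First I would record that $\phi$ is globally Lipschitz in the sup-norm and Hadamard directionally differentiable at $\theta_0$, with derivative $\phi'_{\theta_0}(h)=\min_{\omega\in\Psi(\theta_0)}h(\omega)$, where $\Psi(\theta_0)\subseteq\Delta^{K-2}$ is the (compact) set of minimizers of $\theta_0$; this is the classical Danskin-type result, and it is exactly the map whose value at the Gaussian limit $\mathbb{G}$ of $\sqrt n(\widehat\theta_n-\theta_0)$ defines the target distribution $\psi(\tilde\tau)$ in \eqref{eqn:psi}. The weak convergence $\sqrt n(\widehat\theta_n-\theta_0)\rightsquigarrow\mathbb{G}$ over the compact simplex was established en route to Theorem~\ref{thm:asymp}.

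Next I would establish the two inputs required to conclude. The first is conditional bootstrap consistency of the underlying process, namely that $g_n(\omega)=\sqrt n\big(\breve m_{\tilde\tau}^2(\omega)-\widehat m_{\tilde\tau}^2(\omega)\big)$ converges weakly to $\mathbb{G}$ conditional on the data, in probability. The key observation is that the map $(A,b)\mapsto m_{\tilde\tau}^2(\cdot;A,b)$ is quadratic, hence smooth and \emph{fully} Hadamard differentiable, so the non-smoothness is confined to the outer $\min_\omega$ step. Because the mean-one exponential multiplier bootstrap of Procedure~\ref{proc:bootstrap} is consistent for the sample means $(\widehat A_n,\widehat b_n)$ by the standard multiplier central limit theorem, the functional delta method for the bootstrap transfers this consistency to the squared-moment process uniformly over $\Delta^{K-2}$, giving the claim. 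The second input is that the numerical derivative estimator $\widehat\phi_{m_{\tilde\tau}^2}$ in \eqref{eqn:num-approx} consistently estimates $\phi'_{\theta_0}$: under the step-size conditions $s_n\to0$ and $s_n\sqrt n\to\infty$, together with the global Lipschitz property of $\phi$ just noted, $\widehat\phi_{m_{\tilde\tau}^2}$ satisfies the derivative-consistency requirement of \citet{FS19} (see also \citet{S90}), so that $\widehat\phi_{m_{\tilde\tau}^2}(h_n)\to\phi'_{\theta_0}(h)$ for any $h_n\to h$.

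Combining the two inputs through the main bootstrap theorem of \citet{FS19} then yields conditional weak convergence, in probability, of the bootstrap statistic $\widehat\phi_{m_{\tilde\tau}^2}\big(g_n(\cdot)\big)$ to $\psi(\tilde\tau)=\phi'_{\theta_0}(\mathbb{G})$, the limit of \eqref{eqn:target-dist}. The last step is to upgrade this distributional consistency to consistency of the quantile: since $\widehat c_\beta(\tilde\tau)$ in \eqref{eqn:bootstrap-cv} is the conditional $\beta$-quantile of the bootstrap law, and the cumulative distribution function of $\psi(\tilde\tau)$ is continuous and strictly increasing at $c_\beta(\tilde\tau)$ by hypothesis, the standard quantile-continuity argument gives $\widehat c_\beta(\tilde\tau)\xrightarrow[]{p}c_\beta(\tilde\tau)$.

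The hard part is the combination step: showing that the numerically approximated directional derivative, evaluated at the \emph{random} bootstrap direction $g_n$, recovers $\phi'_{\theta_0}$ applied to $\mathbb{G}$. The difficulty is that $\phi'_{\theta_0}$ is nonlinear---a minimum over the contact set $\Psi(\theta_0)$---while $g_n$ is random, so one must control the interaction between the vanishing step size $s_n$, which drives the finite-difference approximation bias to zero, and the $\sqrt n$-magnified sampling noise carried by $g_n$, which is what forces $s_n\sqrt n\to\infty$. Verifying that these two rates can be balanced, and that the argmin set $\Psi(\theta_0)$ over the compact simplex is well-behaved enough for the Lipschitz and derivative bounds to hold uniformly, is the crux; the remaining pieces are either classical multiplier-CLT and delta-method arguments or direct invocations of \citet{FS19}.
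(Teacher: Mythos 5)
Your proposal is correct and follows essentially the same route as the paper: both verify the hypotheses of Theorem 3.2 of \citet{FS19}---weak convergence of $\sqrt{n}\big(\widehat{m}^2_{\tilde\tau}-m^2_{\tilde\tau}\big)$ from Theorem \ref{thm:asymp}, Hadamard directional differentiability of $\min_{\omega\in\Delta^{K-2}}(\,\cdot\,)$, consistency of the numerical derivative estimator under $s_n\to0$, $s_n\sqrt{n}\to\infty$ (the paper cites FS19's Lemma S.3.8), conditional bootstrap validity of the inner process, and the quantile-continuity step. The only cosmetic difference is that you verify conditional bootstrap consistency via the bootstrap delta method applied to the fully differentiable quadratic map $(A,b)\mapsto m^2_{\tilde\tau}(\,\cdot\,;A,b)$, whereas the paper linearizes the bootstrap process directly into a multiplier empirical process and invokes Theorem 3.6.13 of \citet{VW96}; these are the same argument in different packaging.
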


\begin{remark}[Alternative Perturbations]
    While Procedure~\ref{proc:bootstrap} is theoretically valid, its direct implementation has practical limitations. In \eqref{eqn:num-approx}, the objective function of the first minimization problem, $\widehat{m}_{\tilde{\tau}}^2(\omega) + s_n g_n(\omega)$,
    is \emph{non-convex} in $\omega$ as $g_n(\omega)$ enters with a minus sign in front of $\widehat{m}_{\tilde{\tau}}^2(\omega)$. The  simulation study in Section~\ref{sec:sim} relies on a quadratic programming solver that is computationally efficient but requires convex objectives.\footnote{See \citet{osqp} and the companion package \href{https://osqp.org/}{\texttt{OSQP}} for more information.} To ensure convexity, the code directly perturbs the estimated matrices $(\widehat{A}_n,\widehat{b}_n)$ in the direction $g_n^A=\sqrt{n}\big(\breve{A}_n-\widehat{A}_n\big)$ and $g_n^b=\sqrt{n}\big(\breve{b}_n-\widehat{b}_n\big)$,
    and replaces the non-convex objective $\widehat{m}_{\tilde{\tau}}^2(\omega) + s_n g_n(\omega)$ in \eqref{eqn:num-approx} with
    \begin{align}
        m_{\tilde{\tau}}\left(\omega; \widehat{A}_n + s_n g_n^A,\, \widehat{b}_n + s_n g_n^b\right)'
        m_{\tilde{\tau}}\left(\omega; \widehat{A}_n + s_n g_n^A,\, \widehat{b}_n + s_n g_n^b\right),
        \label{eqn:alt-pert}
    \end{align}
    which remains convex in $\omega$. This alternative perturbation is asymptotically equivalent to Procedure \ref{proc:bootstrap}: its first-order expansion at $(\widehat{A}_n,\widehat{b}_n)$ divided by $s_n$ coincides with that of $\widehat{m}_{\tilde{\tau}}^2(\omega) + s_n g_n(\omega)$  given in \eqref{eqn:rem-alt-pert}. I retain Procedure \ref{proc:bootstrap} in the text for its clarity in relation to Theorem \ref{thm:asymp}. 
\end{remark}

\section{Simulation}
\label{sec:sim}
I assess the finite-sample performance of the inference method introduced in Section \ref{sec:inf} in a Monte Carlo simulation. I also evaluate the robustness of SPT under convex weights in DGPs that satisfy different identifying assumptions, and compare it with DID, SC, and SDID. The DGPs are constructed based on the subsample of the Current Population Survey (CPS) data used in the placebo study of \citetalias{AAHIW21}, which provides repeated cross-sections of log weekly earnings for women from 1979 to 2018 ($T=40$) across $K=50$ states.\footnote{See Section~VI.1 of \citetalias{AAHIW21} (\citeyear{AAHIW21}, Online Appendix) for details on how this subsample is constructed.} In each Monte Carlo replication, a placebo treated state is randomly selected according to the treatment assignment model of \citetalias{AAHIW21}, which correlates  assignment with state-specific minimum wage laws. The remaining $49$ states form the never treated group. The last observed year 2018 is the only post-treatment period, where the placebo treated state receives no treatment so that the true effect $\tau=0$.

I consider two sets of DGPs. The first explores settings where the parallel trends assumption may or may not hold (DGP-1 and DGP-2 of Table \ref{tbl:sim}). In each replication, a repeated cross-section within each state-$k$-year-$t$ cell is iid drawn from a normal distribution:
\begin{align*}
    Y_i\,|\,G_i=k,T_i=t ~~\stackrel{d}{\sim}~~\mathcal{N}\big(\mu_t^k,(\sigma_t^k)^2\big),
\end{align*}
where $\sigma_t^k$ is equal to the sample standard deviation in each state-year cell of the CPS data. DGP-1 and DGP-2 have the same  second moments $\sigma_t^k$ and differ only in the population means $\mu_t^k$, as the two DGPs are designed to vary which identifying assumption holds true. DGP-1 is such that the parallel trends assumption holds in all time periods, where the values for $\mu_t^k$ satisfy \eqref{eqn:pt-rcs-weight} with the parallel-trend implied weight $\omega^{\texttt{PT,RCS}}$. DGP-2 is such that the parallel trends assumption holds only in the pre-period, where the values for $\mu_t^k$ and the underlying true weight $\omega\neq\omega^{\texttt{PT,RCS}}$ yield parallel pre-trends but not parallel post-trends.

\begin{table}[H]
\centering
\caption{\vspace{-.1cm} Simulation Results}
\label{tbl:sim}

\renewcommand{\arraystretch}{.8}
\setlength{\aboverulesep}{-1pt}
\setlength{\belowrulesep}{0pt}
\setlength{\abovetopsep}{-1pt}
\setlength{\belowbottomsep}{-1pt}
\setlength{\heavyrulewidth}{-1pt}

\resizebox{1\columnwidth}{!}{
\begin{tabular}{lccccccccccc}
{} &
\multicolumn{3}{c}{Bias} &
\multicolumn{4}{c}{CI length} &
\multicolumn{4}{c}{Coverage} \\
\cmidrule(lr){2-4}\cmidrule(lr){5-8}\cmidrule(lr){9-12}
\rowcolor{pink!15}
DGP & DID & SC & SDID & DID & SC & SDID & SPT & DID &  SC & SDID & SPT \\
\cline{1-12}
DGP-1 & $0.042$ & $0.038$ & $0.035$ & $0.216$ & $0.257$ &  $0.252$ & $0.824$ & $94.6\%$ & $99.2\%$ & $97.6\%$ &  $100\%$\\
DGP-2 & $0.112$ & $0.138$ & $0.116$ & $0.216$ & $1.090$ & $1.055$ & $1.614$ & $49.8\%$ & $99.8\%$ & $100\%$ & $100\%$\\
\cline{1-12}
DGP-3 &  $0.110$ & $0.100$ & $0.092$ & $0.547$ & $0.482$ & $0.446$ & $1.890$ & $95\%$ & $95\%$ & $95.4\%$ & $100\%$ \\
DGP-4 &  $4.691$ & $4.694$ &  $4.666$ & $0.547$ & $4.640$ & $4.956$ & $6.952$ & $0\%$ & $0\%$ & $0\%$ & $98.4\%$ \\
\cline{1-12}
\end{tabular}
}
\vspace{.2cm}
\begin{tablenotes}[flushleft,para]
\scriptsize
\item \begin{minipage}{\linewidth}
  \setlength{\parskip}{0pt}%
  \linespread{1}\selectfont
  Results are averaged across 500 Monte Carlo replications. The total sample size in each replication is $n = 845{,}920$ ($\raisebox{0.2ex}{\tiny$\sim$}423$ per state-year cell), where for control states, the simulation sample size of each state-year cell equals its 2018 sample share among all controls, multiplied by the total number of control observations in 2018 (based on the original CPS subsample); for the treated state, the number of observations in each year equals its 2018 value. For DID, I estimate an event-study regression of $Y_i$ on $D_i$, $T_i$, and their interaction, with 2017 as the baseline year. The coefficient on the interaction term for $D_i=1$ and $T_i=2018$ is the DID estimate for the treatment effect of the treated state in the post-period (2018). This is numerically equivalent to a double difference-in-means estimator: the difference in the treated state's 2018 and 2017 mean outcomes minus the corresponding difference for the control states' pooled mean outcomes. The $95\%$ confidence interval (CI) is obtained by adding and subtracting $1.96$ times the heteroskedastic-robust standard error, without clustering for DGP-1 and DGP-2 (RCS) and with individual-level clustering for DGP-3 and DGP-4 (panel data). SC and SDID are implemented using the \hyperlink{https://synth-inference.github.io/synthdid/index.html}{\texttt{synthdid}} package of \citetalias{AAHIW21}, with CIs constructed via the placebo variance estimator. For SPT, in each iteration $500$ candidate values are tested, each with $1000$ bootstrap replications, as detailed in Section \ref{sec:inf}. For DGP-1 and DGP-2, the average run-time per implementation in seconds (with the corresponding run-time for DGP-3 and DGP-4 in parentheses) for each method is $4.541$ ($2.473$) for DID, $28.345$ ($6.709$) for SC, $19.514$ ($11.881$) for SDID, and $106.850$ ($471.301$) for SPT.
  \end{minipage}
\end{tablenotes}
\end{table}

As expected, under DGP-1, when the parallel trends assumption holds in all periods, DID performs well, as do the other methods (first row of Table \ref{tbl:sim}). In contrast, when the post-trend is not parallel under DGP-2, DID fails to cover the null effect in over $50\%$ of all replications (second row of Table \ref{tbl:sim}). Because the parallel trends assumption still holds in the pre-periods, examining the event-study coefficients prior to treatment would not reveal this violation. In this case, both SC and SDID remain robust by widening their confidence intervals (CIs): the biases of DID, SC, and SDID are of similar magnitude, where $0.11$ is roughly the amount of violation of parallel post-trend introduced in DGP-2. Consistent with the findings in \citetalias{AAHIW21}, SDID tends to have a smaller bias than SC. CIs for SC and SDID are constructed using the placebo variance estimator defined in Algorithm 4 of \citetalias{AAHIW21} for the case of a single treated unit. This estimator tends to be large when the scale and dispersion of the population means $\mu_t^k$ are large, which is the case in DGP-2.\footnote{In DGP-1, the population means $\mu_t^k$ for the control states are set equal to the corresponding sample means in the original CPS subsample, and the treated state’s mean $\mu_t^1$ is constructed to satisfy the parallel trends assumption as in \eqref{eqn:pt-rcs-weight}. However, because the control means based on the CPS data exhibit little variation, any alternative weight $\omega$ different from the parallel-trends-implied weight $\omega^{\texttt{PT,RCS}}$ in \eqref{eqn:pt-rcs-weight} would generate only a small post-treatment violation. To create a meaningful violation of parallel post-trends in DGP-2 large enough to exceed half the length of the DID confidence interval under DGP-1 (where $0.217/2 \approx 0.11$), I increase the dispersion of the control states’ post-treatment population means in DGP-2.} The CI under SPT is dependent on the scale of the underlying population means by construction and consequently also widens under DGP-2. In contrast, the asymptotic variance of the DID estimator depends only on the idiosyncratic noise $\sigma_t^k$, which is held constant across DGP-1 and DGP-2. Thus, the DID standard errors, when averaged across many individuals, exhibit little variation across the two DGPs.

The second set of DGPs explores a setting where SC-based methods may select an incorrect weight (DGP-3 and DGP-4 of Table \ref{tbl:sim}). In each replication, a set of panel data within each state $k$ is iid drawn from a multivariate normal distribution:
\begin{align*}
    (Y_{i1},...,Y_{iT})|G_i=k~~\stackrel{d}{\sim}~~\mathcal{N}\big(\mu_k,\Sigma\big),
\end{align*}
for $\Sigma$ the rescaled $T\times T$ covariance matrix used in the placebo study of \citetalias{AAHIW21}, obtained by fitting an AR(2) model to the CPS data and assumed to be homoskedastic across states.\footnote{In \citetalias{AAHIW21}, the covariance matrix models the distribution of state-level \emph{sample means}. To generate individual-level data within each state, I rescale this covariance matrix by multiplying it by $n_k$, the sample size within each state $k$. This ensures that the variance of the individual-level outcomes is consistent with the sampling variation implied by the linear factor model in \eqref{eqn:lfm}.} Similar to the first set of DGPs, both DGP-3 and DGP-4 have the same $\Sigma$ but different mean vectors $\mu_k=\big[\mu_1^k,...,\mu_T^k\big]'\in\mathbb{R}^T$. Let $L^{\texttt{SDID}}$ be the $K\times T$ matrix of factors stacking $\lambda_t'\gamma_k$ across states and years used in the placebo study of \citetalias{AAHIW21}, obtained by fitting a rank-$4$ matrix to the matrix of sample cell means from the CPS data; see their Eq. (11). Let $L^{\texttt{SDID}}_{k\cdot}$ denote the $k$-th row of $L^{\texttt{SDID}}$. DGP-3 is such that $\mu_k=(L^{\texttt{SDID}}_{k\cdot})'$ for control states, and for the placebo treated state $k=1$,  $\mu_1=\sum_{k=2}^K\omega_k (L^{\texttt{SDID}}_{k\cdot})'$ for a randomly drawn convex weight $\omega$.\footnote{I regenerate the treated state's factors because the $K\times T$ factor matrix $L^{\texttt{SDID}}$ used in \citetalias{AAHIW21} is such that none of its rows is a convex combination of the other rows, in which case SPT with convex weight is refuted by the data, returning an empty confidence set when implementing the inference method in Section \ref{sec:inf}.} In this case, all methods perform well: as the $50\times 40$ matrix $L^{\texttt{SDID}}$ only has rank $4$, there are many weights that can reproduce the treated unit's factor across all periods, as is also reflected in the wide CI under SPT (third row of Table \ref{tbl:sim}).

DGP-4 is such that only three control states with $\mu_k=L^{\texttt{SDID}}_{k\cdot}$ are relevant for reproducing the treated state's factors, $\mu_1=\frac{1}{3}\sum_{k\in \mathcal{I}_{\texttt{rel}}} (L^{\texttt{SDID}}_{k\cdot})'$, where  $\mathcal{I}_{\texttt{rel}}$ collects the indices of the $3$ relevant controls, which are randomly selected in each replication. The remaining $46$ control states $k\in [K]\setminus\big(\mathcal{I}_{\texttt{rel}}\cup\{1\}\big)$ have the same factors as the treated state only in the pre-treatment periods, and their post-treatment factors are all set equal to the treated state's post-treatment factor \textit{minus $5$}. In this case, essentially all convex weights can achieve an equally good pre-treatment fit. DID will select the convex weight equal to the control states' shares in the never-treated sample; SC under the simplex constraint tends to select a sparse convex weight, but its chosen weight may not be the correct one that only assigns non-zero weight to the $3$ relevant controls; SDID with an $\ell^2$ penalty tends to select a dispersed convex weight, assigning approximately equal weight to all control states. However, these selected weights underweight the $3$ relevant control states and overweight the remaining $46$ control states that are systematically different from the treated state's post-treatment factor by a constant of $5$. Indeed, in the last row of Table \ref{tbl:sim}, the biases of DID, SC, and SDID are all close to $5$, and their CIs fail to cover the null effect in all of the $500$ Monte Carlo replications. In contrast, only SPT remains robust, as it accommodates the multiplicity of weighting schemes that can reproduce the treated state's pre-treatment factors, thereby capturing the true underlying weights associated with the three relevant controls.

\section{Conclusion}
\label{sec:conclusion}
This paper develops a unifying identification framework under the \textit{Synthetic Parallel Trends} (SPT) assumption formally stated in Assumption \ref{asm:spt}, which generalizes and connects the key identifying assumptions behind DID, SC, and their variants. By accounting for all weights that reproduce the treated unit's pre-treatment trends, SPT yields a partially identified set for the treatment effect, nesting existing methods as special cases and robust to violations of their respective identifying assumptions. Leveraging its linear-programming characterization, I propose an inference procedure that constructs a valid confidence set for the identified set. Simulation evidence shows that the proposed method achieves nominal coverage even when the identifying assumptions of existing methods are violated.

The analysis offers new insight into the identifying restrictions of popular empirical strategies and provides new inference methods under weak conditions. Several extensions are natural directions for future research. The identified set under SPT with either affine or convex weights can be tightened by incorporating additional credible restrictions; covariates may be particularly informative in this context, for instance by requiring the weights to match on covariates as in the SC literature. For statistical inference, it would be useful to investigate the role of efficient weighting matrices in the criterion function \eqref{eqn:Q-hat}, which currently uses the identity matrix implicitly. Finally, establishing reasonable conditions for uniform validity of the confidence set is an important question for further work.

\newpage
\bibliographystyle{ecta-fullname} 
\bibliography{synPT.bib}  

\newpage
\begin{appendix}
\numberwithin{asm}{section}
\section{Proofs of Main Results}\label{appn:A}
\begin{proof}[Proof of Proposition \ref{prop:id}]
If $\mu$ is a counterfactual trend such that Assumption \ref{asm:spt} is satisfied, i.e., there exists $\boldsymbol{1}'\omega=1$ such that in the post-period $T$,
\begin{align}
    \mu=\sum_{k=2}^K\omega_k\Delta\mu_T^k(0) \label{eq:proof-id1}
\end{align}
and in pre-periods $t\in\{2,...,T_0\}$,
\begin{align}
    \Delta\mu_t^1(0)=\sum_{k=2}^K\omega_k\Delta\mu_t^k(0). \label{eq:proof-id2}
\end{align}
Then \eqref{eq:proof-id2} implies $\omega\equiv[\omega_2, \dots,\omega_K]'$ satisfies $\Apre\omega=\bpre$, and \eqref{eq:proof-id1} implies $\mu\in\idM$. To prove sharpness, take any $\mu^*\in\idM$. Then there is an affine $\omega^*$ associated with $\mu^*$ such that $\Apre\omega^*=\bpre$ and $\mu^*=\apost'\omega^*$ by definition of $\idM$, which implies that Assumption \ref{asm:spt} is satisfied for the counterfactual trend $\mu^*$ with weight $\omega^*$.
\end{proof}

\begin{proof}[Proof of Proposition \ref{prop:idM-singleton}]
    First focus on the maximization problem in \eqref{eqn:LP} as the primal; the same argument applies to the minimization problem. Its dual is $\min_{\varphi\in\mathbb{R}^{T_0}} ([1 ~~\bpre']\varphi)$ such that $[\boldsymbol{1}~~\Apre']\varphi=\apost$. Under Assumption \ref{asm:spt}, the primal is feasible. If it attains a bounded optimal value, then the dual is also feasible and attains the same optimal value by strong duality \citep[e.g., see][]{BV04}. This implies the last row of the trend matrix in \eqref{eqn:trend-mat} is an affine transformation of the other $T_0$ rows.

    Suppose the last row of the trend matrix in \eqref{eqn:trend-mat} is an affine transformation of the other $T_0$ rows. Then the dual program is feasible and, under Assumption \ref{asm:spt}, attains a bounded optimal value (because if the dual is feasible but unbounded, then by duality the primal is infeasible, which is ruled out by Assumption \ref{asm:spt}). Strong duality again implies the primal also attains the same bounded optimal value. It then follows that $\idM$ is bounded if and only if the affine transformation condition holds.

    It remains to show when $\idM$ is bounded, the lower and upper bounds coincide so that it is in fact a singleton. Observe that both the maximization and minimization problems in \eqref{eqn:LP} have the same constraints and the same objective vector $\apost$, and so do their respective dual programs. Fix an arbitrary feasible dual solution $\varphi^*$. Then for any feasible $\omega$,
    \begin{align*}
        \apost'\omega=(\varphi^*)'\begin{bmatrix}
            \boldsymbol{1}'\\
            \Apre
        \end{bmatrix}\omega=(\varphi^*)'\begin{bmatrix}
            1\\
            \bpre
        \end{bmatrix},
    \end{align*}
    where the first equality follows from feasibility of $\varphi^*$ so that $\apost=[\boldsymbol{1}~~\Apre']\varphi^*$ and the second equality follows from feasibility of $\omega$. Since the choice of $\varphi^*$ is arbitrary, the second equality implies that $\apost'\omega$ is fixed for all primal-feasible $\omega$. Thus the maximal and minimal optimal values coincide.
\end{proof}

\begin{proof}[Proof of Proposition \ref{prop:scid}]
\label{proof:scid}
   Since $\lambda_t'\gamma_k$ is nonrandom, $\mu_t^k(0)=\mathbb{E}[\lambda_t'\gamma_k]=\lambda_t'\gamma_k$ and 
    \begin{align}
        \left|\sum_{k=2}^K\mathbb{E}\big[\widehat{\omega}_k^{\texttt{SC}}\big]\mu_t^k(0)-\mu_t^1(0)\right|=&\left|\sum_{k=2}^K\mathbb{E}\left[\widehat{\omega}^{\texttt{SC}}_k\right]\lambda_t'\gamma_k-\lambda_t'\gamma_1\right|=\left|\mathbb{E}\left[\sum_{k=2}^K\widehat{\omega}^{\texttt{SC}}_k\lambda_t'\gamma_k-\lambda_t'\gamma_1\right]\right|\notag\\
        &\hspace{3.5cm}\leq\mathbb{E}\left[\left|\lambda_t'\left(\sum_{k=2}^K\widehat{\omega}^{\texttt{SC}}_k\gamma_k-\gamma_1\right)\right|\right]\label{eqn:bound1}
    \end{align}
    for all $t\in\{1,...,T\}$.
    Adapting the notation of \citetalias{ADH10}, let $\lambda_{\texttt{P}}$ denote the $T_0\times F$ matrix where the rows stack $\lambda_t'$ for $t\leq T_0$ and $\epsilon_{k,\texttt{P}}$ denote the $T_0\times 1$ vector that, for each $k\in\{1,...,K\}$, stacks $\epsilon_{kt}$ for $t\leq T_0$. Then under the factor model \eqref{eqn:lfm} and Assumption \ref{asm:ADH}, 
    \begin{align}
        0&=\lambda_{\texttt{P}}\biggl(\sum_{k=2}^K\widehat{\omega}_k^{\texttt{SC}}\gamma_k-\gamma_1\biggr)+\sum_{k=2}^K\widehat{\omega}_k^{\texttt{SC}}\epsilon_{k,\texttt{P}}-\epsilon_{1,\texttt{P}}\notag\\
        \implies&\lambda_t'\biggl(\sum_{k=2}^K\widehat{\omega}_k^{\texttt{SC}}\gamma_k-\gamma_1\biggr)=\lambda_t'(\lambda_{\texttt{P}}'\lambda_{\texttt{P}})^{-1}\lambda_{\texttt{P}}'\biggl(\epsilon_{1,\texttt{P}}-\sum_{k=2}^K\widehat{\omega}_k^{\texttt{SC}}\epsilon_{k,\texttt{P}}\biggr), \label{eqn:noise-pre}
    \end{align}
    for any $t\in\{1,...,T\}$, implying that
    \begin{align*}
        \eqref{eqn:bound1}=&\mathbb{E}\left[ \left|\lambda_t'(\lambda_{\texttt{P}}'\lambda_{\texttt{P}})^{-1}\lambda_{\texttt{P}}'\biggl(\epsilon_{1,\texttt{P}}-\sum_{k=2}^K\widehat{\omega}_k^{\texttt{SC}}\epsilon_{k,\texttt{P}}\biggr)\right|\right]\\
        \leq&\mathbb{E}\left[ \left|\lambda_t'(\lambda_{\texttt{P}}'\lambda_{\texttt{P}})^{-1}\lambda_{\texttt{P}}'\epsilon_{1,\texttt{P}}\right|\right]+\mathbb{E}\left[ \left|\lambda_t'(\lambda_{\texttt{P}}'\lambda_{\texttt{P}})^{-1}\lambda_{\texttt{P}}'\biggl(\sum_{k=2}^K\widehat{\omega}_k^{\texttt{SC}}\epsilon_{k,\texttt{P}}\biggr)\right|\right].
    \end{align*}
Under Assumption \ref{asm:ADH}(ii), the largest element of $\lambda_t\in\mathbb{R}^F$ is upper bounded in absolute value by $\overline{\lambda}<\infty$ and the smallest eigenvalue of $\frac{1}{T_0}\lambda_\texttt{P}'\lambda_\texttt{P}$ is lower bounded by $\underline{\xi}>0$. It then follows from the argument in \citetalias{ADH10} (p. 504) that, by Cauchy-Schwarz inequality and eigenvalue decomposition of $(\lambda_\texttt{P}'\lambda_\texttt{P})^{-1}$,
\begin{align*}
    \big(\lambda_t'(\lambda_\texttt{P}'\lambda_\texttt{P})^{-1}\lambda_s\big)^2\leq \big(\lambda_t'(\lambda_\texttt{P}'\lambda_\texttt{P})^{-1}\lambda_t\big)\cdot\big(\lambda_s'(\lambda_\texttt{P}'\lambda_\texttt{P})^{-1}\lambda_s\big)\leq \frac{\lambda_t'\lambda_t}{T_0\underline{\xi}}\cdot\frac{\lambda_s'\lambda_s}{T_0\underline{\xi}}\leq\left(\frac{\overline{\lambda}^2F}{T_0\underline{\xi}}\right)^2
\end{align*}
for any $t,s\in\{1,...,T\}$, and
\begin{align*}
    &\mathbb{E}\left[ \left|\lambda_t'(\lambda_{\texttt{P}}'\lambda_{\texttt{P}})^{-1}\lambda_{\texttt{P}}'\biggl(\sum_{k=2}^K\widehat{\omega}_k^{\texttt{SC}}\epsilon_{k,\texttt{P}}\biggr)\right|\right]=\mathbb{E}\left[ \left|\sum_{k=2}^K\widehat{\omega}_k^{\texttt{SC}}\sum_{s=1}^{T_0}\lambda_t'(\lambda_{\texttt{P}}'\lambda_{\texttt{P}})^{-1}\lambda_s\epsilon_{ks}\right|\right]
    \\
    &\leq\mathbb{E}\left[ \left(\sum_{k=2}^K\widehat{\omega}_k^{\texttt{SC}}\left|\sum_{s=1}^{T_0}\lambda_t'(\lambda_{\texttt{P}}'\lambda_{\texttt{P}})^{-1}\lambda_s\epsilon_{ks}\right|^p\right)^{1/p}\right]\leq\left(\sum_{k=2}^K\mathbb{E}\left[\left|\sum_{s=1}^{T_0}\lambda_t'(\lambda_\texttt{P}'\lambda_\texttt{P})^{-1}\lambda_s\epsilon_{ks}\right|^p\right]\right)^{1/p}\\
    &\leq\frac{\overline{\lambda}F}{\underline{\xi}}\left(\sum_{k=2}^K\mathbb{E}\left[\left|\sum_{s=1}^{T_0}\frac{\epsilon_{ks}}{T_0}\right|^p\right]\right)^{1/p}\lesssim\frac{\overline{\lambda}F}{\underline{\xi}}(K-1)^{1/p}\cdot O\big(\max\big\{T_0^{1/p-1},T_0^{-1/2}\big\}\big),
\end{align*}
where the first inequality follows from Hölder's inequality, the second follows from $\widehat{\omega}^\texttt{SC}$ is convex and Jensen's inequality, the third follows from $|\lambda_t'(\lambda_\texttt{P}'\lambda_\texttt{P})^{-1}\lambda_s|\leq \frac{\overline{\lambda}^2F}{T_0\underline{\xi}}$, and the last follows from $|\epsilon_{ks}|$ having bounded $p$-th moment for the control $k\in\{2,...,K\}$ and Rosenthal's inequality \citep[see][for reference on Rosenthal's inequality]{IS02}. Hence for all $t\in\{1,...,T\}$,
    \begin{align}
        \mathbb{E}\left[ \left|\lambda_t'(\lambda_{\texttt{P}}'\lambda_{\texttt{P}})^{-1}\lambda_{\texttt{P}}'\biggl(\sum_{k=2}^K\widehat{\omega}_k^{\texttt{SC}}\epsilon_{k,\texttt{P}}\biggr)\right|\right]\to0 ~~\text{ as } T_0\to\infty, \label{eqn:pthmoment-ctl}
    \end{align}
    and under the additional assumption that $|\epsilon_{1t}|$ also has bounded $p$-th moment, the same argument from \citetalias{ADH10} applies to show that for all $t\in\{1,...,T\}$,
    \begin{align}
        \mathbb{E}\left[ \left|\lambda_t'(\lambda_{\texttt{P}}'\lambda_{\texttt{P}})^{-1}\lambda_{\texttt{P}}'\epsilon_{1,\texttt{P}}\right|\right] \to0 ~~\text{ as } T_0\to\infty \label{eqn:pthmoment-trt}.
    \end{align}
    Hence for all $t\in\{1,...,T\}$, and in particular for $t\leq T_0$, the inequality in \eqref{eqn:bound1} implies $$\left|\sum_{k=2}^K\mathbb{E}\big[\widehat{\omega}_k^{\texttt{SC}}\big]\mu_t^k(0)-\mu_t^1(0)\right|\,\xrightarrow[]{}\,0,$$
    so for all $2\leq t\leq T_0$, $\left|\sum_{k=2}^K\mathbb{E}\big[\widehat{\omega}_k^{\texttt{SC}}\big]\Delta\mu_t^k(0)-\Delta\mu_t^1(0)\right|\,\xrightarrow[]{}\,0
    $,
    i.e., for $\omega^\texttt{SC}\equiv\mathbb{E}\big[\widehat{\omega}^\texttt{SC}\big]$,
    \begin{align}
     \Apre\omega^\texttt{SC}-\bpre\,\xrightarrow[]{}\,\boldsymbol{0}.\label{eqn:pre-valid-sc}   
    \end{align} 
    In addition, since \eqref{eqn:pthmoment-ctl}-\eqref{eqn:pthmoment-trt} also hold for $t=T$, $\big|\sum_{k=2}^K\mathbb{E}\big[\widehat{\omega}_k^{\texttt{SC}}\big]\mu_T^k(0)-\mu_T^1(0)\big|\,\xrightarrow[]{}\,0$, so
    \begin{align}
        \apost' \omega^\texttt{SC}-\Delta\mu_T^1(0) \,\xrightarrow[]{}\, 0.\label{eqn:post-valid-sc}
    \end{align}
    Therefore, Assumption \ref{asm:spt} is asymptotically satisfied. Let $\mathcal{W}\equiv\{\omega\in\Delta^{K-2}:\Apre\omega=\bpre\}$ be the set of feasible solutions. Then under Assumption \ref{asm:asymp-Apre},
    \begin{align*}
         &\inf_{\delta\in\idM^+}\left|(\lambda_T-\lambda_{T_0})'\gamma_1 -\delta\right|\leq\inf_{\delta\in\idM^+}\left|(\lambda_T-\lambda_{T_0})'\gamma_1 -\apost'\omega^\texttt{SC}\right|+\left|\apost'\omega^\texttt{SC} -\delta\right|\\
         =&\left|(\lambda_T-\lambda_{T_0})'\gamma_1 -\apost'\omega^\texttt{SC}\right|+\inf_{\omega\in\mathcal{W}}\left|\apost'\omega^\texttt{SC}-\apost'\omega\right|\quad\text{by definition of $\idM^+$}\\
         \leq&\underbrace{\left|(\lambda_T-\lambda_{T_0})'\gamma_1 -\apost'\omega^\texttt{SC}\right|}_{\to ~0 \text{ by Eq. \eqref{eqn:post-valid-sc}}}+\|\apost\|\inf_{\omega\in\mathcal{W}}\left\|\omega^\texttt{SC}-\omega\right\|,\quad~~\,\text{by Cauchy-Schwarz inequality}
    \end{align*}
    where using Hoffman error bound \citep[see also Lemma 3.2.3 of \citeauthor{FP03}, \citeyear{FP03} for a textbook reference]{H52} and \eqref{eqn:pre-valid-sc},
    \begin{align*}
        \inf_{\omega\in\mathcal{W}}\left\|\omega^\texttt{SC}-\omega\right\|\lesssim\|\Apre^\dagger\|\|\Apre\omega^{\texttt{SC}}-\bpre\|\to0.
    \end{align*}
\end{proof}

\begin{proof}[Proof of Proposition \ref{prop:sdid-id}]
That the SDID oracle weights $\Tilde{\omega}$ directly balance the nonstochastic latent factors of the treated unit and of the control units during the pre-treatment period up to a constant $\Tilde{\omega}_0$ (i.e., the part of Assumption 4 of \citetalias{AAHIW21} corresponding to Assumption \ref{asm:SDID}(i)) implies $\Apre\Tilde{\omega}-\bpre\to\boldsymbol{0}$ after first-order differencing. By Assumption \ref{asm:SDID}(ii) and convex $\nu$ summing to $1$ component-wise, 
\begin{align*}
    &\left(\mu_{N_1,T}^1(0)-\sum_{k=2}^K\Tilde{\omega}_k\mu_T^k(0)\right)-\sum_{t=1}^{T_0}{\nu}_t\left(\mu_{N_1,t}^1(0)-\sum_{k=2}^K\Tilde{\omega}_k\mu_t^k(0)\right)\\
    &=\left(\mu_{N_1,T}^1(0)-\sum_{k=2}^K\Tilde{\omega}_k\mu_T^k(0)-\Tilde{\omega}_0\right)+\sum_{t=1}^{T_0}{\nu}_t\underbrace{\left(\Tilde{\omega}_0+\sum_{k=2}^K\Tilde{\omega}_k\mu_t^k(0)-\mu_{N_1,t}^1(0)\right)}_{=o(1)\text{ by Assumption \ref{asm:SDID}(i)}}
    =o(1)\\
    &\implies\mu_{N_1,T}^1(0)-\left(\Tilde{\omega}_0+\sum_{k=2}^K\Tilde{\omega}_k\mu_T^k(0)\right)=o(1),
\end{align*}
implying $\apost'\Tilde{\omega}\to\Delta\mu_T^1(0)=\lim_{N_1\to\infty}\frac{1}{N_1}\sum_{j\in \mathcal{I}_1}(\lambda_T-\lambda_{T_0})'\gamma_j$ after first-order differencing. Finally, \eqref{eqn:set-dist-sdid} follows from a similar argument in the proof of Proposition \ref{prop:scid}.\hspace{-.1cm}
\end{proof}

\begin{proof}[Proof of Theorem \ref{thm:asymp}]
    For a generic measurable function $f\in\mathcal{F}$ of a random variable $Z_i$, let $\mathbb{G}_n[f(Z_i)]\equiv \frac{1}{\sqrt{n}}\sum_{i=1}^n \bigl(f(Z_i)-\mathbb{E}[f(Z_i)]\bigr)$ denote the empirical process indexed by the function class $\mathcal{F}$. Recall the $(kt)$-cell sample mean $\widehat{\mu}_t^k$ defined immediately above  \eqref{eqn:Q-hat}. By a first-order Taylor expansion,
    \begin{align*}
        \sqrt{n}(\widehat{\mu}_t^k-\mu_t^k)=\mathbb{G}_n\big[\Tilde{Y}_{t,i}^k\big]+o_p(1),
    \end{align*}
    where
    \begin{align*}
    \Tilde{Y}_{t,i}^k=\begin{cases}
    \mathds{1}\{G_i=k\}Y_{it}/p_k-\mathbb{E}[\mathds{1}\{G_i=k\}Y_{it}]\;\mathds{1}\{G_i=k\}/p_k^2, & \text{if panel}, \\
    \mathds{1}\{G_i=k,T_i=t\}Y_i/\pi_{kt}-\mathbb{E}[\mathds{1}\{G_i=k,T_i=t\}Y_i]\;\mathds{1}\{G_i=k,T_i=t\}/\pi_{kt}^2, & \text{if RCS}.
    \end{cases}
    \end{align*}
    Let 
    \begin{align}
        \widehat{\mu}\equiv[\widehat{\mu}_1^1,...,\widehat{\mu}_T^1,\widehat{\mu}_1^2,...,\widehat{\mu}_T^2,...,\widehat{\mu}_1^K,...,\widehat{\mu}_T^K]'\in\mathbb{R}^{TK}\label{eqn:stack-mu}
    \end{align} 
    and denote its population counterpart by ${\mu}\in\mathbb{R}^{TK}$. Then
\begin{align*}
    \sqrt{n}(\widehat{\mu}-\mu)=\mathbb{G}_n[\Tilde{Y}_i]+o_p(1),
\end{align*}
for $\Tilde{Y}_i\equiv\left[\Tilde{Y}_{1,i}^1,...,\Tilde{Y}_{T,i}^1,\Tilde{Y}_{1,i}^2,...,\Tilde{Y}_{T,i}^2,...,\Tilde{Y}_{1,i}^K,...,\Tilde{Y}_{T,i}^K\right]'\in\mathbb{R}^{TK}$. Let
\begin{align*}
     L_{\texttt{ag}}\equiv\begin{bmatrix}
        -1 & 1 & ~~0 & ~~0 & ~\cdots & 0 & ~~0 \\
        0 & -1 & ~~1 & ~~0 & ~\cdots & 0 & ~~0 \\
        \vdots & \vdots & ~~\vdots & ~~\vdots & ~\ddots & \vdots & ~~\vdots\\
        0 & 0 & ~~0 & ~~0 & \cdots & -1 & ~~1 
    \end{bmatrix}\in\mathbb{R}^{T_0\times T}
\end{align*}
be the first-order difference (lag) matrix. Recall that the entries of $A$ and $b$ defined in \eqref{eqn:Ab-defn} take the form $\Delta\mu_t^k=\mu_t^k-\mu_{t-1}^k$. Then for the $KT_0\times KT$ block diagonal matrix $\mathfrak{L}\equiv\text{diag}\{L_{\texttt{ag}},...,L_{\texttt{ag}}\}$ stacking $L_{\texttt{ag}}$ diagonally,
\begin{align*}
    \sqrt{n}\begin{bmatrix}
        \widehat{b}_n-b\\
        vec(\widehat{A}_n)-vec(A)
    \end{bmatrix}=\sqrt{n}\mathfrak{L}(\widehat{\mu}-\mu)=\mathbb{G}_n[\mathfrak{L}\Tilde{Y}_i]+o_p(1),
\end{align*}
implying that, for $m_{\tilde{\tau}}(\omega;\,\cdot\, )$ and $J(\omega)$ defined in \eqref{eqn:moment-eq-compact},
\begin{align}
    &\sqrt{n}\left\{m_{\tilde{\tau}}\left(\omega;\widehat{A}_n,\widehat{b}_n\right)'m_{\tilde{\tau}}\left(\omega;\widehat{A}_n,\widehat{b}_n\right)-m_{\tilde{\tau}}\left(\omega;A,b\right)'m_{\tilde{\tau}}\left(\omega;A,b\right)\right\}\notag\\
    &=\mathbb{G}_n\left[2\left(J(\omega)\mathfrak{L}\mu+\tilde{\tau}\boldsymbol{e}_{T_0}\right)'J(\omega)\mathfrak{L}\Tilde{Y}_i\right]+o_p(1)\label{eqn:emp-process}\\
    &\xrightarrow[]{d}\mathbb{G}\left[2\left(J(\omega)\mathfrak{L}\mu+\tilde{\tau}\boldsymbol{e}_{T_0}\right)'J(\omega)\mathfrak{L}\Tilde{Y}_i\right]\hspace{1cm}\text{in $~\ell^\infty\big(\Delta^{K-2}\big)$},\notag
\end{align}
where in the last equation, the weak convergence to a Gaussian process $\mathbb{G}[\,\cdot\,]$ in $\ell^\infty(\Delta^{K-2})$, the space of bounded functions on $\Delta^{K-2}$, follows from the fact that the class of functions $$\mathcal{F}\equiv\left\{\left(J(\omega)\mathfrak{L}\mu+\tilde{\tau}\boldsymbol{e}_{T_0}\right)'J(\omega)\mathfrak{L}\Tilde{Y}_i:\omega\in\Delta^{K-2}\right\}$$ is formed by functions linear in $\omega\in\Delta^{K-2}$ compact and hence $\mathcal{F}$ is VC-subgraph \citep[see, e.g.,][]{A94}, with an envelope given by a constant multiple of $\|\tilde{Y}_i\|$ whose square-integrability under  $\mathbb{P}$ follows from $\mathbb{E}\big[Y_{it}^2\big]<C$ in the panel data case or $\mathbb{E}\big[Y_i^2\big]<C$ in the RCS case. By \citet[Theorem 2.5.2]{VW96}, $\mathcal{F}$ is $\mathbb{P}$-Donsker, implying the weak convergence.

Next, by \citet[Lemma S.4.9]{FS19}, the mapping $\min_{\omega\in\Delta^{K-2}}(\,\cdot\,)$ from the squared moment $m_{\tilde{\tau}}^2(\,\cdot\,)\equiv m_{\tilde{\tau}}\left(\,\cdot\,;A,b\right)'m_{\tilde{\tau}}\left(\,\cdot\,;A,b\right)\in\ell^\infty(\Delta^{K-2})$ to $\mathbb{R}_+$ is Hadamard directionally differentiable at $m_{\tilde{\tau}}^2$ tangentially to $\mathcal{C}(\Delta^{K-2})$, the space of continuous functions on $\Delta^{K-2}$. Its directional derivative at $m_{\tilde{\tau}}^2(\,\cdot\,)$ in the direction $g\in\mathcal{C}(\Delta^{K-2})$  is given by
\begin{align}
    \phi_{m_{\tilde{\tau}}^2}(g)\equiv\min_{\omega^*\in\arg\min_{\omega\in\Delta^{K-2}}m_{\tilde{\tau}}^2(\omega)}g(\omega^*).\label{eqn:phi}
\end{align}
By the Delta method for directionally differentiable functions \citep[Theorem 2.1]{FS19},
\begin{align}
    \sqrt{n}\left(\widehat{Q}_n(\tilde{\tau})-{Q}(\tilde{\tau})\right)\xrightarrow[]{d}\phi_{m_{\tilde{\tau}}^2}\left(\mathbb{G}\left[2\left(J(\omega)\mathfrak{L}\mu+\tilde{\tau}\boldsymbol{e}_{T_0}\right)'J(\omega)\mathfrak{L}\Tilde{Y}_i\right]\right)\equiv\psi(\Tilde{\tau}).\label{eqn:psi}
\end{align}
    For $\tilde{\tau}\in\idE^+$, ${Q}(\tilde{\tau})=0$. If $c_{(1-\alpha+\varsigma)}(\tilde{\tau})+\varsigma$ is a continuity point of the distribution of $\psi(\tilde{\tau})$,
    \begin{align*}
        \lim_{n\to\infty}\mathbb{P}\left(\tilde{\tau}\in\mathcal{CS}_n^{(1-\alpha)}\right)=\lim_{n\to\infty}\mathbb{P}\left(\sqrt{n}\widehat{Q}_n(\tilde{\tau})\leq c_{(1-\alpha+\varsigma)}(\tilde    \tau)+\varsigma\right)\geq 1-\alpha.
    \end{align*}
    If $c_{(1-\alpha+\varsigma)}(\tilde{\tau})+\varsigma$ is a discontinuity point, then for an infinitesimal $\varsigma>0$, $c_{(1-\alpha)}(\tilde{\tau})$ is a continuity point and 
    \begin{align*}
        \lim_{n\to\infty}\mathbb{P}\left(\tilde{\tau}\in\mathcal{CS}_n^{(1-\alpha)}\right)\geq\lim_{n\to\infty}\mathbb{P}\left(\sqrt{n}\widehat{Q}_n(\tilde{\tau})\leq c_{(1-\alpha)}(\tilde    \tau)\right)=1-\alpha.
    \end{align*}
\end{proof}

\begin{proof}[Proof of Proposition \ref{prop:boostrap-cv}]
    I verify the assumptions in Theorem 3.2 of \citet{FS19}, under which Proposition \ref{prop:boostrap-cv} follows. Their Assumption 1, Assumption 3(i), and Assumption 3(iii)-(iv) hold by construction; Assumption 2 holds under Theorem \ref{thm:asymp}; Assumption 4 holds by Lemma S.3.8 in their Online Appendix. Finally, to show their Assumption 3(ii) holds, recall $\breve{m}_{\tilde{\tau}}^2(\omega)$ defined in Step 1 of Procedure \ref{proc:bootstrap} and $\widehat{m}_{\tilde{\tau}}^2(\omega)$, ${m}_{\tilde{\tau}}^2(\omega)$ defined in \eqref{eqn:m-sq}.  Let $\breve{\mu}$ be the bootstrap analog of $\widehat{\mu}$ defined in \eqref{eqn:stack-mu} under Procedure \ref{proc:bootstrap}. Then the bootstrap analog of the empirical process in \eqref{eqn:emp-process} is
    \begin{align}
        &\sqrt{n}\left\{\breve{m}_{\tilde{\tau}}^2(\omega)-\widehat{m}_{\tilde{\tau}}^2(\omega)\right\}=\sqrt{n}\left\{\breve{m}_{\tilde{\tau}}^2(\omega)-{m}_{\tilde{\tau}}^2(\omega)\right\}-\sqrt{n}\left\{\widehat{m}_{\tilde{\tau}}^2(\omega)-{m}_{\tilde{\tau}}^2(\omega)\right\}\notag\\
    &=2\left(J(\omega)\mathfrak{L}\mu+\tilde{\tau}\boldsymbol{e}_{T_0}\right)'J(\omega)\mathfrak{L}\sqrt{n}(\breve{\mu}-\mu)\label{eqn:rem-alt-pert}\\
    &\hspace{5cm}-\mathbb{G}_n\left[2\left(J(\omega)\mathfrak{L}\mu+\tilde{\tau}\boldsymbol{e}_{T_0}\right)'J(\omega)\mathfrak{L}\Tilde{Y}_i\right]+o_p(1)\notag\\
    &=\mathbb{G}_n\left[2\left(J(\omega)\mathfrak{L}\mu+\tilde{\tau}\boldsymbol{e}_{T_0}\right)'J(\omega)\mathfrak{L}(W_i-1)\Tilde{Y}_i\right]+o_p(1),\notag
    \end{align}
    where the third equality follows from \eqref{eqn:emp-process} and a similar first-order expansion applied to $\sqrt{n}\left\{\breve{m}_{\tilde{\tau}}^2(\omega)-\widehat{m}_{\tilde{\tau}}^2(\omega)\right\}$, and the last equality follows from $\sqrt{n}\left(\breve{\mu}-\mu\right)=\mathbb{G}_n[W_i\tilde{Y}_i]+o_p(1)$. Then Theorem 3.6.13 of \citet{VW96} applies, implying Assumption 3(ii) of \citet{FS19}.
\end{proof}

\numberwithin{asm}{section}
\section{Auxiliary Results}\label{appn:B}
\subsection{Weighting Across Time Periods}\label{appn:B:time-weights}
Consider the following analogy of Assumption \ref{asm:spt} with time-varying weights:
\begin{asm}[Synthetic Parallel Biases]
\label{asm:spt-time}
    \textit{
    There exists a set of weights $\{\nu_t\}_{t=1}^{T_0}$ such that $\sum_{t=1}^{T_0}\nu_t=1$ and for all $k\in\{2,...,K\}$,
    \begin{align*}
        \sum_{t=1}^{T_0}\nu_t\left(\mu_t^1(0)-\mu_t^k(0)\right)=\mu_T^1(0)-\mu_T^k(0).
    \end{align*}
    }
\end{asm}
In words, Assumption \ref{asm:spt-time} requires that the selection bias between the treated unit and the control unit $k$, $\mu_t^1(0)-\mu_t^k(0)$, has a time-varying pattern such that the post-treatment selection bias $\mu_T^1(0)-\mu_T^k(0)$ can be expressed as an affine combination of pre-treatment selection biases $\{\mu_t^1(0)-\mu_t^k(0)\}_{t=1}^{T_0}$, and this weighting scheme is stable across control units. \citet[Section 6.2]{BK23} propose a similar identification strategy by connecting the post-treatment selection biases to their pre-treatment counterparts, but assume that $\mu_T^1(0)-\mu_T^k(0)$ lies in the convex hull of all $\big\{\mu_t^1(0)-\mu_t^k(0): t\in\{1,...,T_0\}, k\in\{2,...,K\}\big\}$. In contrast, I start with the weaker assumption of affine weights and restrict the post-treatment selection bias relative to unit $k$, $\mu_T^1(0)-\mu_T^k(0)$, to live in the affine hull of pre-treatment biases specific to unit $k$ only, $\{\mu_t^1(0)-\mu_t^k(0)\}_{t=1}^{T_0}$. This not only explores the intertemporal structure within the unit-$k$ specific time series of selection biases, but also connects to the double-robustness property of the two-way weighting identification framework in \citetalias{AAHIW21}, where the counterfactual $\mu_T^1(0)$ is identified by
\begin{align}
    \mu_T^1(0)=\sum_{t=1}^{T_0}\nu_t\mu_t^1(0)+\sum_{k=2}^K\omega_k\mu_T^k(0)-\sum_{t=1}^{T_0}\sum_{k=2}^K\nu_t\omega_k\mu_t^k(0)\label{eqn:double-robust}
\end{align}
if either the time weights or the unit weights are valid. To see this, suppose Assumption \ref{asm:spt} holds and $\omega$ is a valid set of unit weights. Then for all $t\in\{1,...,T_0\}$, $\big(\mu_T^1(0)-\mu_t^1(0)\big)=\sum_{k=2}^K\omega_k\big(\mu_T^k(0)-\mu_t^k(0)\big)$ and therefore for any affine $\nu$, $\sum_{t=1}^{T_0}\nu_t\big(\mu_T^1(0)-\mu_t^1(0)\big)=\sum_{t=1}^{T_0}\nu_t\sum_{k=2}^K\omega_k\big(\mu_T^k(0)-\mu_t^k(0)\big)$, which is equivalent to \eqref{eqn:double-robust}. Similarly, if Assumption \ref{asm:spt-time} holds and $\nu$ is a valid set of time weights. Then for all $k=\{2,...,K\}$, $\big(\mu_T^1(0)-\mu_T^k(0)\big)=\sum_{t=1}^{T_0}\nu_t\left(\mu_t^1(0)-\mu_t^k(0)\right)$ and therefore for any affine $\omega$,
$\sum_{k=2}^K\omega_k\big(\mu_T^1(0)-\mu_T^k(0)\big)=\sum_{k=2}^K\omega_k\sum_{t=1}^{T_0}\nu_t\left(\mu_t^1(0)-\mu_t^k(0)\right)$, which is again equivalent to \eqref{eqn:double-robust}.

\end{appendix}
\end{document}